\newtheorem{theorem}{Theorem}
\newtheorem{lemma}{Lemma}
\newtheorem{proposition}{Proposition}
\theoremstyle{definition}
\newtheorem{remark}{Remark}
\newtheorem{definition}{Definition}
\newtheorem*{examples}{Examples}
\newtheorem*{assumptions}{Assumptions}
\newcommand{\numer}{num\'eraire}
\DeclareMathOperator{\E}{E}
\DeclareMathOperator{\I}{I}
\DeclareMathOperator{\cl}{cl}
\DeclareMathOperator{\pr}{pr}
\DeclareMathOperator*{\argmax}{argmax}
\newcommand{\s}[2]{\langle #1,#2\rangle}
\renewcommand{\hat}{\widehat}
\renewcommand{\tilde}{\widetilde}
\renewcommand{\epsilon}{\varepsilon}
\renewcommand{\P}{\mathrm{P}}
\newcommand{\F}{\mathcal{F}}
\newcommand{\G}{\mathcal{G}}
\newcommand{\R}{\mathbb{R}}
\newcommand{\Thetab}{\Theta}
\newcommand{\Thetabp}{\Theta'}
\newcommand{\FF}{\mathbb{F}}
\newcommand{\GG}{\mathbb{G}}
\newcommand{\B}{\mathcal{B}}
\renewcommand{\H}{\mathcal{H}}
\renewcommand{\b}[1]{\boldsymbol{#1}}
\newcommand{\bH}{\boldsymbol{h}}
\newcommand{\bC}{\boldsymbol{C}}
\newcommand{\bA}{\boldsymbol{A}}
\newcommand{\bpA}{\boldsymbol{A}^\mathrm{p}}
\newcommand{\pA}{{A}^\mathrm{p}}
\newcommand{\bB}{\boldsymbol{B}}
\newcommand{\bv}{\boldsymbol{v}}
\newcommand{\bha}{\hat{\b\alpha}}
\newcommand{\bhb}{\hat{\b\beta}}
\newcommand{\btY}{\tilde{\b Y}}
\newcommand{\carat}{Carath\'eodory}
\title{Capital growth and survival strategies\\in a market with endogenous
prices}
\author{Mikhail Zhitlukhin\thanks{Steklov Mathematical Institute of the
Russian Academy of Sciences. 8 Gubkina St., Moscow, Russia. Email:
mikhailzh@mi-ras.ru.\newline
The research was supported by the Russian Science Foundation, project no.\ 18-71-10097.}}
\date{24 January 2021}
\begin{document}
\maketitle

\begin{abstract}
We call an investment strategy survival, if an agent who uses it maintains a
non-vanishing share of market wealth over the infinite time horizon. In a
discrete-time multi-agent model with endogenous asset prices determined through a
short-run equilibrium of supply and demand, we show that a survival strategy
can be constructed as follows: an agent should assume that only their
actions determine the prices and use a growth optimal (log-optimal) strategy
with respect to these prices, disregarding the actual prices. Then any
survival strategy turns out to be close to this strategy asymptotically. The
main results are obtained under the assumption that the assets are
short-lived.

\medskip \textit{Keywords:} survival strategies, capital growth, relative
growth optimal strategies, endogenous prices, evolutionary finance,
martingale convergence.

\medskip
\noindent
\textit{MSC 2010:} 91A25, 91B55. \textit{JEL Classification:} C73, G11.
% MSC    91A25: Dynamic games
%        91B55: Economic dynamics
% JEL    C73: Stochastic and Dynamic Games, Evolutionary Games, Repeated Games
%        G11: Portfolio Choice, Investment Decisions 
\end{abstract}

\section{Introduction}

The main object of study of this paper is asymptotic performance of
investment strategies in stochastic market models. The mathematical theory
of optimal capital growth originated with the works of \citet{Kelly56},
\citet{Latane59}, \cite{Breiman61}, and one of its central results consists
in that an agent who maximizes the expected logarithm of wealth achieves the
fastest asymptotic growth of wealth over the infinite time horizon (see,
e.g., \cite{AlgoetCover88}). The standard assumption made in this theory is
that an agent has negligible impact on a market, and hence asset prices can
be specified by exogenous random processes not depending on agents'
strategies. The aim of this paper is to extend these results and describe
analogues of growth optimal strategies in a multi-agent market model which
may contain assets with endogenously determined prices.

We consider a discrete-time model of a market with two type of assets.
Assets of the first type, further called \emph{exogenous}, have prices and
dividends represented by exogenous random sequences (without loss of
generality, we will assume that the dividends are included in the prices).
Agents get profit or loss when the prices of these assets change. Assets of
the second type, further called \emph{endogenous}, have exogenous dividends,
but their prices are determined endogenously via a short-run equilibrium of
supply and demand. The supply is exogenous, while the demand is generated by
agents' strategies. Typically, an asset with larger dividends is more
attractive and therefore has a higher price. An important simplifying
assumption that will be made in the paper is that the endogenous assets are
short-lived in the sense that they can be though of as financial contracts
which can be bought at some moment of time, yield payments at the next time
instant, and then expire. For example, they can be derivative securities,
loan agreements, contracts for producing goods, etc. It would be interesting
to incorporate long-lived assets (e.g.\ common stock) with endogenous prices
into the model, but this is a more difficult task and is left for future
research.

We are primarily interested in asymptotic behavior of relative wealth of
agents, i.e.\ their shares in total market wealth. We investigate it from a
standpoint of evolutionary dynamics and view a market as a population of
different strategies competing for capital. The central concept of the paper
is the notion of a \emph{survival} strategy. Such a strategy allows an agent
to keep the relative wealth strictly bounded away from zero over the
infinite time horizon. Our goal is to construct a survival strategy in an
explicit form and to find what effect the presence of this strategy has on
the asymptotic distribution of wealth between market agents. In particular,
we are interested in conditions under which a strategy is asymptotically
\emph{dominating}, i.e.\ an agent using it becomes the single survivor in a
market with the relative wealth converging to 1. In order to find a survival
strategy, the notion of a \emph{relative growth optimal} strategy will be
useful. This is a strategy with the logarithm of its relative wealth being a
submartingale. The fact that a non-positive submartingale converges implies
that a relative growth optimal strategy is survival. The convergence of the
compensator of this submartingale allows to obtain a sufficient condition
for a survival strategy to be also dominating.

Note that, in contrast to the optimal growth theory for markets with
exogenous prices, which deals with absolute wealth of agents, we focus on
relative wealth, which turns out to be more amenable to asymptotic analysis
in the case of endogenous prices. \citet[Section~6]{DrokinZhitlukhin20} show
that the goals of maximization of relative and absolute wealth in a model
with endogenous prices may be incompatible.

Our first main result consists in showing that a relative growth optimal
strategy can be constructed as a growth optimal strategy in a market with
exogenous prices equal to the endogenous prices induced by this strategy
when all the agents in the market use it. We find such a strategy in a
tractable form, as a solution of a two-stage optimization problem. On the
first stage, an agent determines the portfolio of exogenous assets by
maximizing the expected log-return (with some adjustments if it is not
integrable); on the second stage the portfolio of endogenous assets is found
via a solution of another maximization problem. We show that this strategy
is relative growth optimal in any strategy profile, irrespectively of the
strategies used by the other agents. Another its feature, which can be
attractive for possible applications, is that it needs to know little
information about the market: only the current total market wealth and the
probability distribution of returns of the exogenous assets and payoffs of
the endogenous assets, but does not require the knowledge of the other
agents’ individual wealth or their strategies. It also does not depend on
the spot prices of the endogenous assets, and so is not affected by the
impact which an agent may have on the market.

Our second main result shows that the obtained strategy becomes the single
surviving strategy in a market if the representative strategy of the other
agents is asymptotically different from it in a certain sense. Consequently,
if some agent uses this strategy, then any other agent who wants to survive
in the market must use an asymptotically similar strategy. As a corollary,
we show that this strategy asymptotically determines the prices of the
endogenous assets.

The results we obtain are tightly related to and generalize the main results
of \citet{AmirEvstigneev+13} and \cite{DrokinZhitlukhin20}. Those papers
also studied survival and growth optimal strategies in markets with
short-lived assets and endogenous prices, however the models were less
general. In the former paper it was assumed that there are only assets with
endogenous prices; the latter paper also included a risk-free bank account
with an exogenous interest rate. Another extension consists in that we allow
the model to include constraints on agents' portfolios specified by random
convex sets. Among other recent papers related to this setting, let us
mention the paper of \cite{BelkovEvstigneev+20}, which builds another model
that includes assets with endogenous prices and a risk-free asset. A
difference with our model is that they assume asset payoffs depend linearly
on the amount of money invested in the risk-free asset, which allows to
reduce the model to previously known results for models without a risk-free
asset.

Let us mention how this paper is related to other results in the literature.
In models with exogenous prices, the asymptotic growth optimality of the
log-optimal strategy (also called the Kelly strategy, after \cite{Kelly56})
was proved for a general discrete-time model by \citet{AlgoetCover88}; a
review of other related results in discrete time can be found in, e.g.,
\citet[Chapter~16]{CoverThomas12} or \cite{HakanssonZiemba95}. For a
treatment of a general model with continuous time and portfolio constraints,
and a connection of growth optimal portfolios (\numer\ portfolios) with
absence of arbitrage, see, e.g., \cite{KaratzasKardaras07}.

Among various lines of research on markets with endogenous prices, our paper
is most closely related to works in evolutionary finance on stability and
survival of investment strategies, which focus on evolutionary dynamics and
properties like survival, extinction, dominance, and how they affect the
structure of a market. Central to this direction are strategies that perform
well irrespectively of competitors' actions. One of the main results
consists in that the strategy which splits its investment budget between
risky assets proportionally to their expected dividends (often also called
the Kelly strategy) survives in a market provided that the agent's beliefs
about the dividends are correct. See, for example, the papers of
\cite{BlumeEasley92,AmirEvstigneev+05,AmirEvstigneev+11,EvstigneevHens+02,Evstigneev+06,HensSchenkHoppe05},
which establish this fact for different models and under different
assumptions. Reviews of this direction can be found in
\cite{EvstigneevHens+16} or \cite{AmirEvstigneev+20}. Typically, the Kelly
strategy turns out to be the only surviving strategy in a market, i.e.\ it
dominates all other asymptotically different strategies. For results on
market wealth evolution when agents use strategies different from the Kelly
strategy, which may result in survival of several strategies, see, e.g.,
\cite{BottazziDindo14,BottazziDindo+18}.

Most of the above mentioned papers (including the present paper) consider
agent-based models, where agents' strategies are specified directly as
functions of a market state. Another large body of literature consists of
results on market selection of investment strategies in the framework of
general equilibrium, where agents maximize utility from consumption. Among
those results one can mention, for example,
\cite{BlumeEasley06,Borovicka20,Sandroni00,Yan08}. \citet{Holtfort19}
provides a detailed survey of the literature in evolutionary finance over
the last three decades, including also some earlier results.

The paper is organized as follows. Section~\ref{sec-model} describes the
model. The main results of the paper are stated in the three theorems
included in Section~\ref{sec-results}. Section~\ref{sec-proofs} contains
their proofs.

\section{The model}
\label{sec-model}

\subsection{Notation}
For vectors $x,y\in \R^N$, we will denote by $\s xy$ their scalar product,
and by $|x| = \sum_n |x^n|$, $\|x\| = \sqrt{\s xx}$ the $L^1$ and $L^2$
norms. If $f\colon \R\to \R$ is a scalar function and $x$ is a vector, then
$f(x) = (f(x^1),\ldots,f(x^N))$ denotes the coordinatewise application of
$f$ to $x$.

By $e$ we will denote the vector consisting of all unit coordinates,
$e=(1,\ldots,1)$, which may be of different dimensions in different
formulas. In particular, $\s ex$ is equal to the sum of coordinates of a
vector $x$.

All equalities and inequalities for random variables are assumed to hold
with probability 1 (almost surely), unless else is stated.

\subsection{Investors and assets}

Let $(\Omega,\F,\P)$ be a probability space with a discrete-time filtration
$\FF=(\F_t)_{t=0}^\infty$ on which all random variables will be defined.
Without loss of generality, we will assume that $\F$ is $\P$-complete and
$\F_0$ contains all $\P$-null events.

The market in the model consists of $M$ agents (investors) and
$N=N_1+N_2$ assets of two types. The assets of the first type are
available in unlimited supply and have exogenous prices; they are treated as
in standard models of mathematical finance. The assets of the second type
are in limited supply; they yield payoffs which are defined exogenously, but
their prices are determined endogenously from an equilibrium of supply and
demand in each time period. These assets are short-lived in the sense that
they can be purchased by agents at time $t$, yield payoffs at $t+1$, and
then get replaced with new assets; agents cannot sell them, and, in
particular, short sales are not allowed (adding short sales would lead to
conceptual difficulties which we prefer to avoid). We will call the assets
of the first and the second type, respectively, exogenous and endogenous.

The prices of the exogenous assets are represented by positive random sequences
$(S_t^n)_{t=0}^\infty$, $n=1,\ldots,N_1$, which are $\FF$-adapted (i.e.\
$S_t^n$ is $\F_t$-measurable). We assume that dividends, if there are any,
are already included in the prices. By $X_t^n=S_t^n/S_{t-1}^n>0$ we will
denote the relative price changes. The payoffs of the endogenous assets (per
one unit of an asset) are represented by non-negative adapted sequences
$(Y_t^n)_{t=1}^\infty$, $n=1,\ldots,N_2$. Without loss of generality, we
assume that the supply of each endogenous asset is equal to 1, so $Y_t^n$ is
the total payoff of an asset. Their prices will be defined later, as we
first need to define agents' strategies, on which they will depend.

The agents enter the market at time $t=0$ with non-random initial wealth
$v_0^m>0$, $m=1,\ldots,M$. Actions of an agent at time $t\ge0$ are
described by a pair of vectors $h_t=(\alpha_t,\beta_t)$, where $\alpha_t \in
\R^{N_1}$, $\beta_t\in \R_+^{N_2}$ specify in what proportions this agent
allocates the current wealth between the assets of the two types (the wealth
sequences are yet to be defined), i.e.\ the proportion $\alpha_t^n$
(respectively, $\beta_t^n$) of wealth is allocated to asset $n$.\footnote{In
the literature, time indices are often shifted by 1 forward (so $h_t$
represents actions at time $t-1$, and, hence, is a predictable sequence).
But in discrete time this is just a matter of notation. For our purposes, it
will be more convenient to let $h_t$ specify actions at time $t$.}

Since $\alpha_t,\beta_t$ are proportions, we require that $\s e{\alpha_t} +
\s e{\beta_t} = 1$. The components of $\beta_t$ are non-negative, because
short sales of the endogenous assets are not allowed. Additionally, we will
assume that it is not possible to buy the endogenous assets on borrowed
funds, i.e.\ $\s e\alpha \in [0,1]$, and hence $\s e\beta\in [0,1]$.
Consequently, $h_t$ assumes values in the set
\[
\H = \{(\alpha,\beta) \in \R^{N_1}\times
\R_+^{N_2} : \s e\alpha \in [0,1],\;  \s e\beta = 1 - \s e\alpha \}.
\]
In order to emphasize that a pair $h_t$ is selected by agent $m$ we will
use the superscript~$m$, e.g.\ $h_t^m=(\alpha_t^m,\beta_t^m)$.

A strategy of an agent consists of investment proportions $h_t^m$
selected at consecutive moments of time. It may (and, usually, does) depend
on a random outcome and market history. In order to specify this dependence,
introduce the measurable space ($\Thetab, \G)$ with
\[
\Thetab = \Omega\times \R_+^M\times (\H^M)^\infty, \qquad
\G = \F\otimes \B(\R_+^M\times (\H^M)^\infty),
\]
where an element $\chi = (\omega, v_0, h_0, h_1,\ldots)\in\Thetab$ consists
of a random outcome $\omega$, a vector of initial wealth
$v_0=(v_0^1,\ldots,v_0^M)\in\R^M_{+}$, and vectors of investment proportions
$h_t = (h_t^1,\ldots,h_t^M)$ selected by the agents at each moment of
time. Let $\GG =(\G_t)_{t\ge 0}$ be the filtration on $\Thetab$ defined by
\[
\G_t = \F_t \otimes\B(\R_+^M\times (\H^M)^{t+1}),
\]
i.e. $\G_t$ is generated by sets $\Gamma\times V\times H_0\times\ldots\times
H_{t}\times (\H^M)^\infty$ with $\Gamma\in\F_t$ and Borel sets
$V\subseteq\R_+^M$, $H_s\subseteq \H^M$. We define a strategy of an agent
as a sequence of $\G_t$-measurable functions
\[
\bH_t(\chi) \colon \Thetab \to \H, \qquad t\ge 0.
\]
Basically, $\bH_t$ can be thought of as a function
$\bH_t(\omega,v_0,h_0,\ldots,h_t)$, but the notation $\bH_t(\chi)$ will be
more convenient for us because we will deal with functions depending on
market histories of different length appearing in one formula, see, e.g.,
\eqref{wealth-equation} below. Note the dependence of $\bH_t$ on the
argument $h_t$, i.e.\ an agent may use information (partial or whole)
about actions of other agents at the same moment of time $t$. This
information may be available to an agent, for example, from asset prices.

We call a vector of initial wealth $v_0$ and a strategy profile
$(\bH^1,\ldots,\bH^M)$ feasible if there exists a sequence of
$\F_t$-measurable functions $h_t(\omega) =
(h_t^1(\omega),\ldots,h_t^M(\omega))\in\H^M$ such that for all $\omega,t,m$
\begin{equation}
\bH_t^m(\chi(\omega)) = h_t^m(\omega),\ \text{where}\ \chi(\omega) =
(\omega,v_0,h_0(\omega),h_1(\omega),\ldots).\label{realization}
\end{equation}
Such a sequence $h(\omega)$ will be called a \emph{realization} of the
agents' strategies corresponding to the given strategy profile and
initial wealth. We do not require the uniqueness of a realization, i.e.\
equation \eqref{realization} may have several solutions. The main results of
the paper will hold for any chosen realization (however, the uniqueness may
be desirable for other applications).

\begin{remark}[On notation]
\label{remark-bold}
By the bold font we denote functions which depend on $\chi$, i.e.\ on a
random outcome and market history, while functions which depend only on a
random outcome $\omega$ (e.g.\ realizations of strategies) are denoted by
the normal font. In particular, if $\b\zeta$ is a function of $\chi$, then,
given a vector of initial wealth and a strategy profile, we denote by
$\zeta(\omega)$ its realization $\b\zeta(\chi(\omega))$, where
$\chi(\omega)$ is as in \eqref{realization}.

If $\xi$ is a random variable, i.e.\ a function of $\omega$ only, we will
sometimes use the same letter to denote the function $\xi(\chi)$ which just
ignores the values of $v_0$ and $h_s$, i.e. $\xi(\chi)=\xi(\omega)$ at an
element $\chi=(\omega,v_0,h_0,h_1,\ldots)$.
\end{remark}

Sufficient conditions for a vector of initial wealth and a strategy profile
to be feasible, in general, can be formulated in terms of assumptions of
fixed-point theorems, but we do not investigate this question in details --
our main goal is to find an optimal strategy, and the strategy which we find
will be optimal in any feasible profile. Nevertheless, it is easy to see
that a simple sufficient condition for the feasibility is that the functions
$\bH_t^m$ do not depend on the argument $h_t$, i.e.\ adapted to the
filtration $\GG^- = (\G_t^-)_{t\ge0}$, where
\[
\G_t^- = \F_t \otimes\B(\R_{+}^M\times (\H^M)^{t}).
\]
This condition can be interpreted as that at each moment of time the
agents decide upon their actions simultaneously and independently of each
other.

Now we can define the prices of the endogenous assets and the wealth
sequences $\bv_t^m(\chi)$ inductively in $t$, beginning with $\bv_0^m(\chi)
= v_0^m$. Denote the prices at time $t$ by $\b p_{t}^n(\chi)$,
$n=1,\ldots,N_2$. Suppose for some $\chi\in\Thetab$ the wealth sequences are
defined up to a moment of time $t$, and $\bv^m_{t}(\chi)\ge 0$ for all $m$.
Then agent $m$ can purchase $\b y_t^{m,n}(\chi)$ units of asset $n$ at
this moment, where
\[
\b y_t^{m,n} = \frac{\beta_t^{m,n} \bv_{t}^m}{\b p_{t}^n},
\]
and $\beta_t^{m,n}$ (also $\alpha_t^{m,n}$ below) are taken from the
component $h_t$ entering $\chi$. In order to clear the market (recall that
the supply of each asset is 1), the prices should be equal to
\begin{equation}
\b p_{t}^n = \sum_{m=1}^M \beta_t^{m,n} \bv_{t}^m.\label{prices}
\end{equation}
Essentially, we employ the principle of moving equilibrium, which operates
with economic variables changing with different speeds. In our model, the
endogenous asset prices move fast, while the investment proportions selected
by the agents move slow; the proportions can be considered fixed while the
prices rapidly adjust to clear the market. The mechanics of this adjustment
process is not important to us (as long as it does not inflict transaction
costs) and it can be modeled by various approaches, e.g.\ limit order books,
auctions, etc. Note that we do not require the agents to agree upon future
asset prices at each random outcome. For a discussion of this moving
equilibrium approach in a similar model, see Section~4
in~\cite{EvstigneevHens+20}.

If $\sum_m \beta_t^{m,n}(\chi)=0$ in formula \eqref{prices} for some $n$,
i.e.\ no one invests in asset $n$, we put $\b y_t^{m,n}(\chi)=0$ for all
$m$; in this case the price $\b p_{t}^n(\chi)$ can be defined in an
arbitrary way with no effect on the agents' wealth, so we will put $\b
p_{t}^n(\chi)=0$ in accordance with \eqref{prices}.

Thus, the portfolio of agent $m$ between moments of time $t$ and $t+1$
consists of $\b y_t^{m,n}$ units of endogenous asset $n$, and $\b x_t^{m,n}$
units of exogenous asset $n$, where
\[
\b x_t^{m,n} = \frac{\alpha_t^{m,n}\bv_{t}^m}{S_{t}^n}.
\]
Consequently, the wealth of this agent at $t+1$ is determined by the
relation
\begin{equation}
\bv_{t+1}^m = \sum_{n=1}^{N_1}\b x_t^{m,n} S_{t+1}^n +
\sum_{n=1}^{N_2} \b y_t^{m,n}Y_{t+1}^n = \biggl(\sum_{n=1}^{N_1}
\alpha_t^{m,n} X_{t+1}^n + \sum_{n=1}^{N_2}
\frac{\beta_t^{m,n}Y_{t+1}^n}{\sum_k
\beta_t^{k,n}\bv_{t}^k}\biggr)\bv_{t}^m
\label{wealth-equation}
\end{equation}
(with $0/0=0$ in the right-hand side).

Observe that in equation \eqref{wealth-equation} the value of $\bv_{t+1}^m$
may become negative, which will make the right-hand side of the equation
meaningless for the next time period. However, below we will introduce
portfolio constraints which prohibit strategies that may lead to negative
wealth. In view of this, we will restrict the domain of the functions $\b
v_t^m$ and define them on sets smaller than $\Thetab$. Namely, introduce
inductively the sets
\[
\Thetab_{t} = \{\chi \in \Thetab : \b
v_{s}^m(\chi)\ge 0\ \text{for all}\ s\le t,\ m=1,\ldots,M\}, \quad t\ge0,
\]
where $\b v_s^m(\chi)$ are computed by \eqref{wealth-equation}. Note that
$\Thetab_0 = \Thetab$, $\Thetab_{t}\supseteq\Thetab_{t+1}$, and $\Thetab_t
\in \G_t^-$. From now on, we will assume that the functions $\b v_t^m$ are
defined only for $\chi\in\Thetab_t$.

It will be also convenient to introduce the sets
\[
\Thetabp_t = \{\chi \in \Thetab_t : \bv_t(\chi)\neq 0\}, \quad t\ge0.
\]
Observe that, essentially, components ($\b\alpha_t,\b\beta_t)$ of an
agent's strategy need to be defined only on $\Thetabp_t$, since elements
from $\Omega\setminus\Thetab_t$ do not correspond to any realization, and on
the set $\{\chi: \bv_t(\chi)=0\}$ they can be defined in an arbitrary way
without any effect on (zero) wealth.

\subsection{Portfolio constraints}
Portfolio constraints in the model are specified by a sequence of
$\G_t^-$-measurable random\footnote{See Section~\ref{sec-random-sets}
for details on random sets.} non-empty closed convex sets
$\bC_t(\chi)\subseteq \H$, $t\ge0$. The constraints are the same for each
agent.

We say that a strategy $\bH$ satisfies the portfolio constraints if
\[
\bH_t(\chi) \in \bC_t(\chi)\ \text{for all $t\ge 0$ and $\chi\in\Thetab$}.
\]
From now on, when writing ``a strategy'', we will always mean a strategy
satisfying the portfolio constraints.

Notice that the sets $\bC_t$ are essentially needed to be defined only for
elements $\chi\in \Thetabp_{t}$. Thus it may be convenient to put, for
example, $\b C_t = \R_+^{N}$ on $\Thetab\setminus\Thetabp_{t}$, without any
effect on realizations of the agents' wealth in the model.

We will consider portfolio constraints only of the following particular
form: they are imposed on the exogenous and endogenous assets separately,
and an agent can freely choose what proportion of wealth to invest in the
assets of each of the two types. Namely, it will be assumed that
\begin{equation}
\label{C-product}
\bC_t = (\bA_t \times \bB_t) \cap \H,
\end{equation}
where $\b A_t$ and $\b B_t$ are $\G_{t}^-$-measurable closed convex sets in
$\R^{N_1}$ and $\R^{N_2}_+$ such that $\s e\alpha \in[0,1]$, $\s e\beta
\in[0,1]$ for any $\alpha\in \bA_t(\chi)$, $\beta\in\bB_t(\chi)$. We also
require that
\begin{align}
&\text{if}\ \alpha \in \bA_t(\chi),\ \text{then}\ \lambda \alpha \in
\bA_t(\chi)\ \text{for any}\ \lambda \in [0,1/\s e\alpha],\label{A-cone}\\
&\text{if}\ \beta \in \bB_t(\chi),\ \text{then}\ \lambda \beta
\in \bB_t(\chi)\ \text{for any}\ \lambda \in [0,1/\s e\beta]\label{B-cone}
\end{align}
(or $\lambda\in[0,\infty)$ if $\s e\alpha=0$ or $\s e\beta = 0$); i.e.\
$\bA_t$ and $\bB_t$ can be represented as intersections of some convex
cones with the sets $\{\alpha \in \R^{N_1}: \s e\alpha \in [0,1]\}$ and
$\{\beta \in \R^{N_2}_+: \s e\beta \in [0,1]\}$ respectively. Note that
relation \eqref{C-product} implies that the sets $\bA_t$, $\bB_t$ cannot
simultaneously (for the same $t,\chi$) consist of only elements $\alpha$ or,
respectively, $\beta$ with zero sum of coordinates, since then the set
$\bC_t$ would be empty.

We will need to further restrict the class of portfolio constraints by
introducing several assumptions on the structure of the sets $\bA_t,\bB_t$.
In what follows, let $K_t(\omega, d\tilde \omega)$ denote some fixed version
of the regular conditional distribution with respect to $\F_t$. By $\P_t$
and $\E_t$ we will denote, respectively, the regular probability and
expectation computed with respect to $K_t$, i.e.\ for a random event
$\Gamma\in\F$ and a random variable $\xi$ we put
\[
\P_t(\Gamma)(\omega) = K_t(\omega, \Gamma),\qquad \E_t(\xi)(\omega) =
\int_{\Omega} \xi(\tilde \omega) K_t(\omega, d\tilde
\omega).
\]
When $\b\xi$ depends also on market history, i.e.\ $\b\xi=\b\xi(\chi)$ is
$\G$-measurable, we put
\[\E_t(\b\xi)(\chi) = \int_{\Omega}
\b\xi(\tilde \omega, v_0, h_0,\ldots) K_t(\omega,
d\tilde \omega),\qquad \chi = (\omega, v_0,h_0,\ldots) ,
\]
provided that the integral is well-defined.

Let us introduce several random sets  which will be needed
to formulate the assumptions on the sets $\bA_t,\bB_t$:
\begin{itemize}[leftmargin=*,itemsep=0em,topsep=0mm]
\item the sets of portfolios of exogenous assets which have non-negative values at the next moment of time:
\[
D_t(\omega) =\{\alpha \in \R^{N_1} : \P_{t}(\s\alpha{X_{t+1}}\ge 0)(\omega) =
1\};
\]

\item the linear spaces of \emph{null investments} (portfolios of exogenous
assets with zero current and next value):
\[
L_t(\omega) = \{\alpha \in \R^{N_1} : \s e\alpha = 0,\;\P_{t}(\s\alpha{X_{t+1}}=0)(\omega)=1\};
\]
\item the projection of $\bA_t$  on the orthogonal space $L_t^\perp$:
\[
\bpA_t(\chi) = \{\alpha \in L_t^\perp(\omega) : \
\exists\, u\in L_t(\omega)\ \text{such that}\ \alpha+u \in \bA_t(\chi)\}.
\]
\end{itemize}
Observe that the sets $D_t$, $L_t$ are $\F_{t}$-measurable, and $\bpA_t$ are
$\G_{t}^-$-measurable. Indeed, we can represent $D_t(\omega) =\{\alpha :
f(\omega,\alpha) =0\}$ with the function $f(\omega,\alpha) =
\E_{t}({\s{\alpha}{X_t}}^-\wedge 1)(\omega)$, which is a \carat\ function,
so $D_t$ is measurable by Filippov's theorem (see
Proposition~\ref{prop-filippov} in Section~\ref{sec-random-sets}). The
set $L_t$ is measurable since it is the intersection of $D_t$, $-D_t$ and
$\{\alpha : \s\alpha e=0\}$. The measurability of $\bpA_t$ follows from
Proposition~\ref{prop-projection}.

\medskip
Now we are ready to formulate the assumptions on the portfolio
constraints. In the remaining part of the paper we always assume that they
are satisfied.
\begin{assumptions}
For all $t\ge 1$ and $\chi=(\omega,v_0,h_0,\ldots)$ it holds that
\begin{enumerate}[label=(A.\arabic*),leftmargin=*,itemsep=0.25em,topsep=0.25em,font=\bfseries]
\item \label{assumption-1} $\bA_t(\chi) \subseteq D_t(\omega)$;
\item \label{assumption-2} there exists $(\alpha,\beta)\in \bC_t(\chi)$
such that $\P_{t}(\s\alpha{X_{t+1}} + \s\beta{Y_{t+1}} > 0)(\omega)=1$;
\item \label{assumption-3} $\bpA_t(\chi)\subseteq
\bA_t(\chi)$;
\item \label{assumption-4} $\bpA_t(\chi)$ is a compact set.
\end{enumerate}
\end{assumptions}

Let us comment on interpretation of these assumptions. \ref{assumption-1}
is imposed to ensure that any strategy which satisfies the portfolio
constraints generates a non-negative wealth sequence. As a consequence, for
the realization of any profile of strategies satisfying the portfolio
constraints we have
\[
\chi(\omega) = (\omega,v_0,h_0(\omega),h_1(\omega),\ldots)\in \Thetab_t\ \text{a.s.\ for
all}\ t\ge 0.
\]
Since the underlying probability space and the filtration are complete, we
can assume that the above inclusion holds for all $\omega\in\Omega$, if
necessary modifying the strategies on a set of zero probability.

Assumption \ref{assumption-2} implies that there exists a strategy with a
strictly positive wealth sequence. Such a strategy can be found via a
standard measurable selection argument, using that $\bC_t$ are measurable
sets. Observe that \ref{assumption-2} is a very mild assumption. For
example, it holds if there is a non-zero vector $\alpha\in \bA_t$ with all
non-negative coordinates (recall that $X_t^n>0$ for all $n$), since then
$(\alpha/|\alpha|,0)\in \bC_t$ by \eqref{A-cone}.

Assumption \ref{assumption-3} means that the agents can remove null
investments from their portfolios. Note that in the literature it is
sometimes required that $L_t\subseteq \bA_t$ (i.e.\ any investment that
leads to no profit or loss is allowed). It is not difficult to see that in
our model this requirement implies \ref{assumption-3}.

Assumption \ref{assumption-4} will allow to reduce the optimal strategy
selection problem to an optimization problem on a compact set. Actually, it
is equivalent to the no arbitrage condition for the exogenous assets -- or,
more precisely, no \emph{unbounded} arbitrage condition -- as we show in the
next section.

\subsection{Absence of unbounded arbitrage opportunities}
Let $U_t(\omega)$ denote the cone of arbitrage opportunities in the
exogenous assets at time $t\ge0$, which consists of all $u\in \R^{N_1}$ such
that
\[
\s eu = 0,\qquad
\P_{t}(\s u{X_{t+1}} \ge 0)(\omega) = 1,\qquad
\P_{t}(\s u{X_{t+1}} > 0)(\omega) > 0.
\]
We say that there are no unbounded arbitrage opportunities in the model if
for all $\chi=(\omega,v_0,h_0,\ldots)\in\Thetab$ and $t\ge 0$ the following
assumption holds:
\begin{enumerate}[label=(A.\arabic*),leftmargin=*,itemsep=0.25em,topsep=0.5em,start=5,font=\bfseries]
\item\label{assumption-na} there is no $u\in U_t(\omega)$ such that $\lambda u \in \bA_t(\chi)$
for any $\lambda>0$.
\end{enumerate}
In other words, an agent cannot infinitely multiply the profit from an
arbitrage opportunity, but the set $\bA_t$ may contain some of them. This
condition is analogous to the no unbounded increasing profit condition
(NUIP), known in connection with \numer\ portfolios, see
\citet[Proposition~3.10]{KaratzasKardaras07}. If there are no constraints on
the exogenous assets (i.e.\ $\bA_t = \{\alpha\in\R^{N_1} : \s e\alpha
\in[0,1]\}$), then \ref{assumption-na} is equivalent to the usual
no-arbitrage condition $U_t = \emptyset$.

\begin{proposition}
\label{prop-na}
Suppose the model satisfies assumptions \ref{assumption-1},
\ref{assumption-3}. Then assumptions \ref{assumption-4} and
\ref{assumption-na} are equivalent.
\end{proposition}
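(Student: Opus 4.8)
The plan is to reduce both \ref{assumption-4} and \ref{assumption-na} to the single convex-geometric condition
\[
0^{+}\bA_t(\chi) \subseteq L_t(\omega), \qquad (\star)
\]
where $0^{+}\bA_t(\chi)$ is the recession (asymptotic) cone of the closed convex set $\bA_t(\chi)$, and then to show that $(\star)$ is equivalent to each of them. Fix $t$ and $\chi=(\omega,v_0,h_0,\ldots)$, abbreviate $A=\bA_t(\chi)$, $R=0^{+}A$. First note that $0\in A$ (take $\lambda=0$ in \eqref{A-cone}); hence, $A$ being closed and convex, $d\in R$ if and only if $\lambda d\in A$ for all $\lambda\ge0$, which is the same as $\lambda d\in A$ for all $\lambda>0$. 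Moreover, since $A\subseteq\{\alpha:\s e\alpha\in[0,1]\}$, scaling forces $\s ed=0$ for every $d\in R$; and by \ref{assumption-1}, $A\subseteq D_t(\omega)$, so $R=0^{+}A\subseteq 0^{+}D_t=D_t(\omega)$ (the set $D_t$ being a closed convex cone containing $0$). Thus $R\subseteq D_t(\omega)\cap\{\alpha:\s e\alpha=0\}$.

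For $\ref{assumption-na}\iff(\star)$: by the first observation, \ref{assumption-na} says exactly that $R\cap U_t(\omega)=\emptyset$. Comparing the definitions of $U_t$ and $L_t$, one sees that inside the cone $P:=\{u:\s eu=0,\ \P_t(\s u{X_{t+1}}\ge0)=1\}$ one has $u\in L_t$ iff $\P_t(\s u{X_{t+1}}=0)=1$ iff $u\notin U_t$, so that $P$ is the disjoint union of $U_t$ and $L_t$; in particular $U_t=P\setminus L_t$. Since $R\subseteq P$, we get $R\cap U_t=R\setminus L_t$, whence $\ref{assumption-na}\iff R\setminus L_t=\emptyset\iff(\star)$.

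For $\ref{assumption-4}\iff(\star)$: I would first use \ref{assumption-3} to see that $\bpA_t(\chi)$ is \emph{closed}. By definition $\bpA_t(\chi)=\pr_{L_t^\perp}(A)$, the orthogonal projection of $A$ onto $L_t^\perp(\omega)$, and \ref{assumption-3} says $\pr_{L_t^\perp}(A)\subseteq A$; together with the trivial inclusion $A\cap L_t^\perp\subseteq\pr_{L_t^\perp}(A)$ this gives $\bpA_t(\chi)=A\cap L_t^\perp(\omega)$, an intersection of closed sets. A closed convex subset of $\R^{N_1}$ is compact iff bounded iff its recession cone is $\{0\}$, and $0^{+}(A\cap L_t^\perp)=R\cap L_t^\perp$ (the recession cone of an intersection of closed convex sets with a common point is the intersection of the recession cones, and $0^{+}L_t^\perp=L_t^\perp$). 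Hence, if $(\star)$ holds then $R\cap L_t^\perp\subseteq L_t\cap L_t^\perp=\{0\}$, so \ref{assumption-4} holds. Conversely, if \ref{assumption-4} holds and $d\in R$, then $\lambda d\in A$ for all $\lambda\ge0$, so $\lambda\,\pr_{L_t^\perp}(d)=\pr_{L_t^\perp}(\lambda d)\in\pr_{L_t^\perp}(A)=\bpA_t(\chi)$ for all $\lambda\ge0$; boundedness of $\bpA_t(\chi)$ forces $\pr_{L_t^\perp}(d)=0$, i.e.\ $d\in L_t$, which is $(\star)$. Chaining, $\ref{assumption-4}\iff(\star)\iff\ref{assumption-na}$.

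The only genuine obstacle is the bookkeeping with recession cones; the subtle point is that $\bpA_t$ is defined as a linear image (a projection) of the closed convex set $A$, and such images need not be closed in general, so the step ``compact $\iff$ bounded'' for $\bpA_t$ is not automatic — it is precisely \ref{assumption-3} that makes $\bpA_t=A\cap L_t^\perp$, hence closed. Beyond that one only invokes the standard facts $0^{+}D_t=D_t$, $0^{+}(A\cap L_t^\perp)=R\cap L_t^\perp$, and ``closed convex with trivial recession cone $\Rightarrow$ bounded''. No measure-theoretic argument is involved, as everything is carried out at a fixed $(\omega,\chi)$.
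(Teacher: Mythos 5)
Your proof is correct, and it is organized differently from the paper's. The paper proves \ref{assumption-na} $\Rightarrow$ \ref{assumption-4} directly (dismissing the converse as easy): it assumes a sequence $u_n\in\bpA_t$ with $|u_n|\to\infty$, normalizes, extracts a convergent subsequence, and shows the limit direction is an unbounded arbitrage opportunity, the closedness of $\bpA_t$ being attributed to \ref{assumption-3} without elaboration. You instead interpose the single condition $0^{+}\bA_t(\chi)\subseteq L_t(\omega)$ and show each of \ref{assumption-4}, \ref{assumption-na} is equivalent to it, replacing the bespoke subsequence argument by the standard convex-analysis facts that a non-empty closed convex set is bounded iff its recession cone is trivial and that recession cones commute with intersection at a common point. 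The hypotheses enter in the same places in both arguments: \ref{assumption-1} (together with $\s e\alpha\in[0,1]$ on $\bA_t$) places every recession direction in $D_t\cap\{\alpha:\s e\alpha=0\}$, which is the disjoint union of $U_t$ and $L_t$, and \ref{assumption-3} gives closedness of $\bpA_t$; your identity $\bpA_t=\bA_t\cap L_t^\perp$ makes that last step explicit, and your remark that projections of closed convex sets need not be closed correctly identifies why \ref{assumption-3} is needed there. What your route buys is symmetry (both implications fall out of one equivalence) and a reusable geometric criterion; what the paper's route buys is self-containedness, since the limiting argument is carried out by hand rather than by citing recession-cone calculus. One small point worth stating: taking $\lambda=0$ in \eqref{A-cone} to get $0\in\bA_t(\chi)$ uses $\bA_t(\chi)\neq\emptyset$, which holds because $\bC_t(\chi)$ is assumed non-empty.
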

\begin{proof}
It is easy to see that \ref{assumption-4} implies \ref{assumption-na}. Let
us prove the converse implication. Suppose \ref{assumption-na} holds. The
closedness of $\bpA_t$ follows from that $\bA_t$ is closed and assumption
\ref{assumption-3}.

To prove that $\bpA_t$ is bounded, fix $\chi=(\omega,v_0,h_0,\ldots)$ and
suppose, by way of contradiction, that there is a sequence $u_n \in
\bpA_t(\chi)$ such that $|u_n| \to \infty$. The sequence $u_n / |u_n|$ is
bounded, so there exists a convergent subsequence $u_{n_k}/|u_{n_k}| \to u$.
It is easy to see that $\s eu = 0$ (because $\s e{u_n} \in[0,1]$), and
$|u|=1$, $u\in L_t^\perp(\omega)$. The last two properties imply that $u
\notin L_t(\omega)$. Moreover, since $\bA_t \subseteq D_t$, we have
$\P_{t}(\s{u_n}{X_{t+1}} \ge 0)(\omega) = 1$, and hence $\P_{t}(\s
u{X_{t+1}} \ge 0)(\omega) = 1$. Consequently, $u\in U_t(\omega)$.

However, for any $\lambda>0$ and $k$ such that $|u_{n_k}|\ge \lambda$, we
have
\[
\frac\lambda{|u_{n_k}|}
u_{n_k} \in \bA_t(\chi),
\]
and, in the limit, $\lambda u \in \bA_t(\chi)$, so $u$ is an unbounded
arbitrage opportunity, which is a contradiction.
\end{proof}

\begin{examples}
Arbitrage opportunities may be eliminated by imposing appropriate portfolio
constraints, even if the unconstrained model with the same exogenous prices
$S_t$ has arbitrage. As an example, observe that assumption
\ref{assumption-na} automatically holds when portfolio constraints limit
portfolio leverage in the sense that
\begin{equation}
\bA_t \subseteq \{\alpha \in \R^{N_1} : c_t|a^+| \ge |a^-|\} ,\label{leverage}
\end{equation}
where $0\le c_t < 1$ is a random variable (or, in particular, a constant),
and $\alpha^\pm=((\alpha^1)^\pm,\ldots,(\alpha^{N_1})^\pm)$ are the vectors
consisting of the positive and negative parts of the coordinates of
$\alpha$. In this case, if $\s e\alpha = 0$ for $\alpha\in\bA_t$, then
$\alpha=0$, so $U_t\cap \bA_t=\emptyset$ and \ref{assumption-na} holds.

Constraint \eqref{leverage} means that the long positions of a portfolio
should cover the short positions with some margin, which is determined by
the constant $c_t$. If $\alpha\neq0$, this is equivalent to that
\[
|\alpha^+|\ge |\alpha^-|\ \text{and}\ \frac{|\alpha^-|}{|\alpha^+| -|\alpha^-|} \le c_t',
\]
where $c_t' = c_t/(1-c_t)$, which can be interpreted as that the ratio of
the debt to the value of a portfolio (the leverage) is bounded by $c_t'$. If
$c_t=0$, then \eqref{leverage} prohibits short sales of the exogenous
assets. For details on this leverage constraint and how it can be used in
problems of hedging and optimal growth, see e.g.\
\cite{EvstigneevZhitlukhin13,BabaeiEvstigneev+20a,BabaeiEvstigneev+20b}.

Constraint \eqref{leverage} can be relaxed if one requires
\[
\bA_t \subseteq \{\alpha \in \R^{N_1} : d_t + c_t|a^+| \ge |a^-|\},
\]
where $d_t \ge 0$. In this case, $\bA_t$ may include some portfolios with $\s
e\alpha=0$ (besides $\alpha=0$), in particular arbitrage opportunities, but
the set $\bA_t \cap \{\alpha : \s e\alpha = 0\}$ remains bounded, so there
are still no unbounded arbitrage opportunities.
\end{examples}

\section{Main results}
\label{sec-results}
\subsection{The notion of optimality}
We will be interested in long-run behavior of relative wealth of agents,
i.e.\ their shares in total market wealth. We define the total market wealth
and the relative wealth of agent~$m$ as, respectively,
\[
\b W_t = \sum_{m=1}^M \b v_t^m, \qquad \b r_t^m = \frac{\b v_t^m}{\b W_t},
\]
where $\b r_t^m = 0$ if $\b W_t=0$. Recall that $\b v_t$ is defined on the
set $\Thetab_t$, hence we will assume that $\b W_t$ and $\b r_t^m$ are
defined only on this set as well.

For a given feasible strategy profile and a vector of initial wealth, by $W_t(\omega) = \b
W_t(\chi(\omega))$, $r_t^m(\omega) = \b r_t^m(\chi(\omega))$ we will denote
the corresponding realizations defined as in Remark~\ref{remark-bold}. The
realizations of the agents' wealth sequences will be denoted by
$v_t^m(\omega)$.

\begin{definition}
In a feasible strategy profile $(\bH^1,\ldots,\bH^M)$ with initial wealth $v_0\in
\R_+^M$ such that $v_0^m>0$, we call a strategy $\bH^m$
\emph{survival}\footnote{We use the terminology of \cite{AmirEvstigneev+13}.
Note that often a strategy is called survival if
$\limsup\limits_{t\to\infty} r_t^m > 0$, see, e.g., \cite{BlumeEasley92}.}
if
\[
\inf_{t\ge 0} r_t^m >0 \text{ a.s.},
\]
and call it \emph{dominating} if
\[
\lim_{t\to\infty} r_t^m =1 \text{ a.s.}
\]
\end{definition}

Our main goal will be to show that the strategy $\hat\bH$ which we construct
in the next section is survival in any strategy profile and dominating in a
strategy profile if the strategies of the other agents are, in a certain
sense, different from it asymptotically. Consequently, if some agents use
$\hat\bH$, then any other survival strategy should be asymptotically close
to it.

Note that any survival strategy is asymptotically unbeatable in the
following sense: if agent $m$ uses a survival strategy then there exists
a (finite-valued) random variable $\gamma$ such that
\[
r_t^k \le \gamma r_t^m, \qquad k=1,\ldots,M,\; t\ge 0,
\]
which expresses the fact that the wealth of any other agent cannot grow
asymptotically faster than the wealth of an agent who uses a survival
strategy. For a discussion of unbeatable strategies as a game solution
concept in related evolutionary finance models, see e.g.\
\cite{AmirEvstigneev+13}.

At the same time, we would like to emphasize that we do not insist on that
all agents should use only survival strategies, as they may have other
economic goals or make systematic errors. We only investigate what happens
with a market \emph{if} some agents use such strategies.

For construction of a survival strategy, the following notion will be
useful.

\begin{definition}
For a given feasible strategy profile and initial wealth, we call a strategy $\bH^m$
\emph{relative growth optimal} if
\[
v_t^m > 0\ \text{for all $t\ge 0$ and}\ \ln r_t^m\text{ is a submartingale}.
\]
\end{definition}

Since any non-positive submartingale has a finite limit with probability 1
(see, e.g., \citet[Chapter~7.4]{Shiryaev19-2}), for a relative growth
optimal strategy we have $\lim_{t\to\infty} \ln r_t^m > -\infty$, and
therefore $r_\infty^m=\lim_{t\to\infty} r_t^m > 0$. This implies the
following result.

\begin{proposition}
A relative growth optimal strategy is survival.
\end{proposition}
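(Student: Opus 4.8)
The statement to prove is: \emph{A relative growth optimal strategy is survival.}

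This is essentially a one-line corollary of the preceding discussion, so my plan is to make that discussion precise. By definition, a relative growth optimal strategy $\bH^m$ has $v_t^m > 0$ for all $t$ and $(\ln r_t^m)_{t\ge 0}$ is a submartingale. Since $r_t^m = v_t^m / W_t$ and $W_t = \sum_k v_t^k \ge v_t^m > 0$, we have $0 < r_t^m \le 1$, hence $\ln r_t^m \le 0$ for all $t$. So $(\ln r_t^m)$ is a non-positive submartingale.

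The key step is to invoke the convergence theorem for non-positive (equivalently, bounded-above) submartingales: a submartingale bounded above in $L^1$ — which a non-positive submartingale automatically is, since $\E[(\ln r_t^m)^-]=\E[-\ln r_t^m]=-\E[\ln r_t^m]\le -\E[\ln r_0^m]<\infty$ by the submartingale property, and $\E[(\ln r_t^m)^+]=0$ — converges almost surely to a finite limit. Thus $\ln r_\infty^m := \lim_{t\to\infty}\ln r_t^m$ exists and is $>-\infty$ a.s., so $r_\infty^m = \exp(\ln r_\infty^m) > 0$ a.s.

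Finally I would deduce $\inf_{t\ge0} r_t^m > 0$ a.s. Since $t\mapsto r_t^m(\omega)$ is a sequence of strictly positive numbers converging to a strictly positive limit $r_\infty^m(\omega)$, the infimum over $t$ is attained in the limit or at finitely many early indices, and in either case is strictly positive: explicitly, $\inf_{t\ge0} r_t^m \ge \min\bigl(r_\infty^m,\, \min_{t\le T} r_t^m\bigr)$ where $T$ is chosen (pathwise) so that $r_t^m \ge r_\infty^m/2$ for $t\ge T$; both quantities in the minimum are positive a.s. Hence $\inf_{t\ge0} r_t^m > 0$ a.s., which is exactly the survival property. There is no real obstacle here — the only point requiring care is confirming that a non-positive submartingale is automatically $L^1$-bounded so that the martingale convergence theorem applies, which is the content of the citation to \citet[Chapter~7.4]{Shiryaev19-2} already quoted in the text.
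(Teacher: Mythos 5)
Your proof is correct and follows essentially the same route as the paper: the paper also observes that $\ln r_t^m$ is a non-positive submartingale, invokes the a.s.\ convergence of non-positive submartingales to conclude $r_\infty^m>0$, and from this (together with $r_t^m>0$ for each $t$) deduces survival. Your explicit verification of $L^1$-boundedness and of the final step $\inf_{t\ge 0} r_t^m>0$ simply fills in details the paper leaves implicit.
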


Note that if the relative wealth of an agent is ``infinitesimal'' (so the
strategy of this agent does not affect the prices of the endogenous
assets), then a relative growth optimal strategy for this agent, which
depends on the current endogenous prices $\b p_t$, can be found as a growth
optimal portfolio in a market with $N=N_1+N_2$ exogenous assets, considering
$\b p_t$ as exogenous prices. In particular, if the asset returns are
sufficiently integrable, then such a strategy maximizes the one-period
expected logarithmic return, see, e.g., \cite{AlgoetCover88} or
\citet[Chapter~16]{CoverThomas12}. The important feature of the strategy
that we construct in the next section is that it essentially
depends\footnote{Strictly speaking, this strategy may also depend on some
additional information contained in the market history $\chi$, but only
through the dependence of the portfolio constraints on such information.}
only on the current total market wealth $\b W_t$, but not on the current
endogenous prices, and hence will be a survival strategy for an agent
with any relative wealth.

\subsection{Construction of a relative growth optimal strategy}
In this section we find one relative growth optimal strategy in an explicit
form. The idea behind the construction of this strategy consists in that we
find it as a growth optimal portfolio in a market with endogenous prices
induced by it (see Theorem~\ref{th-numeraire}). We begin with a lemma which
defines the components $\bha,\bhb$ of this strategy. Its statement is
somewhat involved, but clarifying comments will be provided in
Remark~\ref{technical} below.

Recall that we need to define $\bha_t,\bhb_t$ only on the set $\Thetabp_t$,
while on its complement these functions can be defined in an arbitrary way
(respecting the $\G_{t}$-measurability and the portfolio constraints), since
this will not have any effect on realizations of wealth sequences.

In the statement of the lemma and subsequent results, we will use the
following agreement to treat indeterminacies: $0/0 = 0$, $ 0\cdot\ln 0 = 0$,
$ a\cdot\ln 0 = -\infty$ if $a>0$.

\begin{lemma}
\label{lemma-alpha-beta}
The following statements hold true for each $t\ge 0$.

(a) Consider the $\G_{t+1}^-$-measurable vectors $\btY_{t+1}$ in $\R^{N_2}$
with the components
\[
\btY_{t+1}^n(\chi) = Y_{t+1}^n(\omega)\I(\exists\,\beta\in\bB_{t}(\chi) :
\beta^n>0),
\]
and the functions
\[
g_i(x) = \frac 1i + i\arctan\Bigl(\frac xi\Bigr),
\qquad x\in\R_+,\; i=1,2,\ldots
\]
Then there exist $\G_{t}^-$-measurable functions $\bha_{t,i}$ such that for
all $\chi\in\Thetabp_{t}$
\begin{equation}
\bha_{t,i} \in \argmax_{\alpha\in \bpA_t} \bigl\{\E_{t} \ln
g_i(\s\alpha{X_{t+1}}\b W_{t} + |\btY_{t+1}|) - \s e\alpha\bigr\}.\label{max-gi}
\end{equation}

(b) There exists an increasing sequence of $\G_{t}^-$-measurable functions
$i_j(\chi)$, $j\ge 1$, with positive integer values, and a
$\G_{t}^-$-measurable function $\bha_t$ with values in $\bpA_t$, such that
on the set $\Thetabp_{t}$
\[
\bha_t = \lim_{j\to\infty}\bha_{t,i_{j}}.
\]

(c) The set $\tilde \bB_t = \{\beta \in \bB_t(\chi) : |\beta| = 1-\s e{\bha_t(\chi)}\}$ is
non-empty for $\chi\in\Thetabp_{t}$ and
there exists a  $\G_{t}^-$-measurable
function $\bhb_t$ with values in $\bB_t$ such that
for any $\chi\in\Thetabp_{t}$
\begin{equation}
\bhb_t \in \argmax_{\beta\in\tilde \bB_t}
\biggl\{ \E_{t}\frac{\s
{\ln \beta}{\btY_{t+1}}}{\s{\bha_t}{X_{t+1}}\b W_{t} +
|\btY_{t+1}|} \biggr\}.\label{beta-def}
\end{equation}
\end{lemma}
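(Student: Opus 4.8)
The plan is to treat parts (a), (b), (c) separately; in each case the task reduces to selecting, in a $\G_t^-$-measurable way, a maximizer of a concave integral functional over a compact measurable constraint multifunction, and I would carry this out with a measurable version of Weierstrass' theorem followed by a measurable selection (the value function is measurable and the $\argmax$ is a measurable closed-valued multifunction; cf.\ Filippov's theorem, Proposition~\ref{prop-filippov}, and the propositions on random sets of Section~\ref{sec-random-sets}). It is worth recording at the outset the role of the functions $g_i$: since $g_i'(x)=(1+(x/i)^2)^{-1}>0$ and $g_i''(x)<0$ on $\R_+$, with range $g_i(\R_+)\subseteq[1/i,\,1/i+i\pi/2]$, each $\ln g_i$ is bounded, increasing and concave, so $g_i$ is a well-behaved surrogate for the possibly non-integrable logarithm — this is what keeps every integral in part~(a) finite, with part~(b) then passing to the limit $i\to\infty$, and it is also the mechanism (together with the penalty $-\s e\alpha$) by which the budget constraint $\s e\alpha+\s e\beta=1$ is respected in the limit.

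For part (a), I would fix $\chi\in\Thetabp_t$ and note that $\alpha\mapsto\s\alpha{X_{t+1}}\b W_t+|\btY_{t+1}|$ is affine, so $\alpha\mapsto\ln g_i(\s\alpha{X_{t+1}}\b W_t+|\btY_{t+1}|)-\s e\alpha$ is concave and bounded; hence, by dominated convergence, $\alpha\mapsto\E_t\ln g_i(\cdots)-\s e\alpha$ is continuous and still concave, and jointly in $(\chi,\alpha)$ it is a \carat\ function, because $\b W_t$ is $\G_t^-$-measurable, $\btY_{t+1}$ is $\G_{t+1}^-$-measurable, and $\E_t$ of a bounded \carat\ integrand is again \carat. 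By \ref{assumption-4} — and because $\bC_t\neq\varnothing$ forces $\bA_t\neq\varnothing$, hence (using \ref{assumption-3}) $\bpA_t\neq\varnothing$ — the multifunction $\bpA_t$ is non-empty and compact-valued, and it is $\G_t^-$-measurable as recalled before the Assumptions. A continuous function attains its maximum on a non-empty compact set, so $\argmax_{\alpha\in\bpA_t}\{\cdots\}$ is non-empty, and the measurable maximum / Filippov theorem makes it a measurable closed-valued multifunction; any of its $\G_t^-$-measurable selectors gives $\bha_{t,i}$ satisfying \eqref{max-gi} on $\Thetabp_t$ (extended off $\Thetabp_t$ by a fixed measurable selection of $\bpA_t$).

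For part (b), I would use the standard construction of a measurable convergent subsequence. Since all $\bha_{t,i}$ take values in the compact $\G_t^-$-measurable multifunction $\bpA_t$, the multifunction $\chi\mapsto\bigcap_{m\ge1}\cl\{\bha_{t,i}(\chi):i\ge m\}$ of cluster points is a decreasing intersection of non-empty compact $\G_t^-$-measurable multifunctions (the closure of a countable union of measurable selections is measurable), hence is itself non-empty, compact-valued and $\G_t^-$-measurable. I would take a $\G_t^-$-measurable selector $\bha_t$ of it; since $\bha_t(\chi)$ is a cluster point of $(\bha_{t,i}(\chi))_i$, the recursion $i_0\equiv1$, $i_j(\chi)=\min\{i>i_{j-1}(\chi):|\bha_{t,i}(\chi)-\bha_t(\chi)|<1/j\}$ defines a strictly increasing sequence of $\G_t^-$-measurable positive-integer-valued functions (the minimum is over a non-empty set with a measurable defining condition), and $|\bha_{t,i_j}-\bha_t|<1/j$ gives $\bha_{t,i_j}\to\bha_t$ on $\Thetabp_t$.

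For part (c), I would first check that $\tilde\bB_t$ is non-empty on $\Thetabp_t$: since $\bha_t\in\bpA_t\subseteq\bA_t$ we have $\s e{\bha_t}\in[0,1]$, so taking $(\alpha_0,\beta_0)\in\bC_t$ with $\s e{\beta_0}>0$ the rescaled vector $\bigl((1-\s e{\bha_t})/\s e{\beta_0}\bigr)\beta_0$ lies in $\bB_t$ by \eqref{B-cone} and hence in $\tilde\bB_t$; the only remaining possibility is that every element of $\bC_t$ has $\s e{\beta_0}=0$, which means $\bB_t=\{0\}$ and $\btY_{t+1}=0$, and then one needs the short verification that the penalized problem \eqref{max-gi} yields $\s e{\bha_t}=1$, so that $0\in\tilde\bB_t$. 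The set $\tilde\bB_t=\bB_t\cap\{\beta:\s e\beta=1-\s e{\bha_t}\}$ is then a non-empty compact convex $\G_t^-$-measurable multifunction (intersection of the measurable closed-valued $\bB_t$ with a measurably moving hyperplane, bounded since $\beta\ge0$, $\s e\beta\le1$). The map $\beta\mapsto\s{\ln\beta}{\btY_{t+1}}=\sum_n\btY_{t+1}^n\ln\beta^n$ is concave and upper semicontinuous in $\beta$ (with the stated conventions), and on $\tilde\bB_t$ one has $\beta^n\le1$, so $\s{\ln\beta}{\btY_{t+1}}\le0$ and the $\beta$-independent denominator in \eqref{beta-def} is non-negative; hence $F(\chi,\beta):=\E_t[\s{\ln\beta}{\btY_{t+1}}/(\s{\bha_t}{X_{t+1}}\b W_t+|\btY_{t+1}|)]$ is well defined in $[-\infty,0]$, and by reverse Fatou it is upper semicontinuous and concave in $\beta$ and $\G_t^-$-measurable in $\chi$. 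An upper semicontinuous function attains its supremum on a non-empty compact set, so $\argmax_{\beta\in\tilde\bB_t}F(\chi,\cdot)\neq\varnothing$, and a measurable selection argument for this normal integrand (again via the projection theorem / Proposition~\ref{prop-filippov}) yields a $\G_t^-$-measurable $\bhb_t$ with values in $\tilde\bB_t\subseteq\bB_t$ satisfying \eqref{beta-def}. The main obstacle, in all three parts, is the measurability bookkeeping — checking that each objective is a normal integrand and each constraint a measurable compact-valued multifunction so that the measurable $\argmax$ selection applies — and, in part (c), the delicate handling of the possibly $-\infty$-valued, non-integrable integrand $F$ (well-definedness and upper semicontinuity via reverse Fatou) together with the degenerate-case verification that $\tilde\bB_t$ is non-empty.
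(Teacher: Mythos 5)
Your parts (a) and (b) are correct and essentially identical to the paper's argument: the objective in \eqref{max-gi} is a \carat\ function, $\bpA_t$ is a non-empty compact measurable correspondence, so the measurable maximum theorem gives $\bha_{t,i}$; and your cluster-point construction for $\bha_t$ is exactly the paper's Proposition~\ref{convergent-subsequence} written out. The problems are in part (c). First, your dichotomy is off: from ``every element of $\bC_t$ has zero $\beta$-part'' you conclude $\bB_t=\{0\}$, but this does not follow --- if every $\alpha\in\bA_t$ has $\s e\alpha=1$, then $\bC_t=\bA_t\times\{0\}$ while $\bB_t$ may contain non-zero vectors, and then your subsequent claim $\btY_{t+1}=0$ is also false (non-emptiness of $\tilde\bB_t$ still holds there, but only because $\bha_t\in\bpA_t\subseteq\bA_t$ forces $\s e{\bha_t}=1$). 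The correct split, as in the paper, is on $\bB_t$ itself: if $\bB_t\neq\{0\}$, rescale a non-zero element via \eqref{B-cone}; if $\bB_t=\{0\}$, then $\btY_{t+1}=0$ and one must prove $\s e{\bha_t}=1$.

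Second, and this is the substantive gap: in that degenerate case you defer the fact that the penalized problem yields $\s e{\bha_t}=1$ to a ``short verification'' which you never supply, and it is not short. For fixed $i$ the maximizer $\bha_{t,i}$ need \emph{not} have unit sum, because $\ln g_i$ is bounded and saturates, so the marginal gain from scaling $\alpha$ up can be smaller than the linear penalty $\s e\alpha$; the claim holds only for the limit $\bha_t$, and establishing it is precisely the content of the paper's intermediate lemma, namely relations \eqref{ineq-positive} and \eqref{alpha-eq}: one differentiates $\epsilon\mapsto f_i((1-\epsilon)\bha_{t,i}+\epsilon\alpha)$ at $0$, justifies the interchange of derivative and $\E_t$ by Fatou using the bounds behind \eqref{g-limit}, passes to the limit along $i_j$ by dominated convergence, and uses assumption \ref{assumption-2} to rule out degeneracy and obtain $\P_t(\s{\bha_t}{X_{t+1}}\b W_t+|\btY_{t+1}|>0)=1$. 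Without this argument part (c) is not proved; it is the technical core that motivates the whole $g_i$-truncation set-up. A more minor point: your measurable-argmax step for the extended-real-valued objective $F$ invokes selection results for upper semicontinuous normal integrands, which are standard but not among the tools stated in Section~\ref{sec-random-sets}; the paper instead reduces to the \carat\ case by replacing $f$ with $\max\bigl(f(\chi,\beta),f(\chi,\tilde{\b\beta}(\chi))\bigr)$ for a measurable $\tilde{\b\beta}\in\tilde\bB_t$ with $\tilde{\b\beta}^n>0$ whenever $\P_t(\btY_{t+1}^n>0)>0$, which leaves the argmax unchanged. Your route is acceptable, but you should either cite the normal-integrand maximum theorem explicitly or use such a reduction.
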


\begin{theorem}
\label{theorem}
In every feasible strategy profile, any strategy $\hat\bH = (\bha,\bhb)$ constructed as in
Lemma~\ref{lemma-alpha-beta} is relative growth optimal.
\end{theorem}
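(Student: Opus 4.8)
We fix an agent who uses $\hat\bH$---say agent $1$---together with a feasible profile, and verify for the corresponding realization the two defining properties of relative growth optimality; write $v_t=v_t^1$, $r_t=r_t^1=v_t/W_t$. Since the initial wealth is positive and $r_t\le1$, it suffices to show that $v_t>0$ for all $t$ and that the a.s.\ one‑step inequality $\E_t(\ln r_{t+1})\ge\ln r_t$ holds for every $t$, where $\E_t$ is the regular conditional expectation with respect to $\F_t$. (A routine measurability argument, using that the profile is a realization, that $\bha_t,\bhb_t$ are $\G_t^-$‑measurable, and the definition of $\GG$, lets one replace conditioning on $\G_t$ along the realization by $\E_t$; and $(\ln r_t)^+=0$ takes care of integrability.)

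Positivity follows by induction on $t$. Given $v_t>0$, equation~\eqref{wealth-equation} gives $v_{t+1}=v_t\bigl(\s{\hat\alpha_t}{X_{t+1}}+\sum_n\hat\beta_t^n Y_{t+1}^n/p_t^n\bigr)$ with $p_t^n=\sum_k\beta_t^{k,n}v_t^k\ge\hat\beta_t^n v_t$, and every summand is nonnegative by \ref{assumption-1} and \ref{assumption-3} (which give $\hat\alpha_t\in\bpA_t\subseteq\bA_t\subseteq D_t$); since $p_t^n\le W_t$ it remains to exclude $\s{\hat\alpha_t}{X_{t+1}}+\s{\hat\beta_t}{\tilde Y_{t+1}}=0$ on a $\P_t$‑positive set. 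By \ref{assumption-2} there is $(\alpha^*,\beta^*)\in\bC_t$ with $\s{\alpha^*}{X_{t+1}}+\s{\beta^*}{Y_{t+1}}>0$ a.s.; passing to the $L_t^\perp$‑projection of $\alpha^*$ (which changes neither $\s{\alpha^*}{X_{t+1}}$ nor $\s e{\alpha^*}$, and lands in $\bpA_t$) and using $\s{\beta^*}{Y_{t+1}}=\s{\beta^*}{\tilde Y_{t+1}}\le|\tilde Y_{t+1}|$, one gets $\s{\alpha^*}{X_{t+1}}W_t+|\tilde Y_{t+1}|>0$ a.s. A portfolio that degenerates on a positive‑probability set is then strictly suboptimal in \eqref{max-gi} for large $i$ (there $\ln g_i(0)=-\ln i\to-\infty$, while the projected competitor's objective stays controlled), so Lemma~\ref{lemma-alpha-beta}(b) yields $\s{\hat\alpha_t}{X_{t+1}}W_t+|\tilde Y_{t+1}|>0$ a.s., whence $v_{t+1}>0$. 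The non‑integrability of $\ln$ makes this comparison delicate and is exactly what the $g_i$ and the limiting subsequence are for.

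For the submartingale inequality, introduce the gross returns $R:=v_{t+1}/v_t=\s{\hat\alpha_t}{X_{t+1}}+\sum_n\hat\beta_t^n Z_{t+1}^n/\bar\beta_t^n$ and $\bar R:=W_{t+1}/W_t=\s{\bar\alpha_t}{X_{t+1}}+\sum_{n:\bar\beta_t^n>0}Z_{t+1}^n$, where $Z_{t+1}^n=Y_{t+1}^n/W_t$ and $\bar\alpha_t=\sum_k r_t^k\alpha_t^k$, $\bar\beta_t=\sum_k r_t^k\beta_t^k$ are the wealth‑weighted aggregate proportions, so that $(\bar\alpha_t,\bar\beta_t)\in\bC_t$ and $p_t^n=\bar\beta_t^n W_t$. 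Then $\ln(r_{t+1}/r_t)=\ln R-\ln\bar R$ with $R,\bar R>0$, so the desired inequality is equivalent to $\E_t(\ln R)\ge\E_t(\ln\bar R)$, for which---by the tangent‑line bound $\ln\bar R\le\ln R+(\bar R-R)/R$---it suffices to prove
\[
\E_t\!\left(\frac{\bar R}{R}\right)\le 1 .
\]
Writing $\bar R=\sum_k r_t^k R^k$ with $R^k=\s{\alpha_t^k}{X_{t+1}}+\sum_n\beta_t^{k,n}Z_{t+1}^n/\bar\beta_t^n$, this asserts that $\hat\bH$ is ``\numer-optimal'' with respect to the realized aggregate, and this is the heart of the proof. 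It is where Lemma~\ref{lemma-alpha-beta} enters: one decomposes $\bar R/R$ into an exogenous and an endogenous part; the exogenous part is controlled by the first‑order optimality of $\bha_t$ in \eqref{max-gi} (this is where the auxiliary functions $g_i$---bounded, with $g_i(x)\to x$ and $g_i'(x)\to1$---render the otherwise possibly non‑integrable logarithmic gradients tractable in the limit along $i_j$, and where the penalty $-\s e\alpha$ and the ``maximal‑collection'' proxy $|\btY_{t+1}|$ appear as Lagrange‑type terms governing the split of the budget between the two asset classes); the endogenous part is controlled by the optimality of $\bhb_t$ in \eqref{beta-def}, whose integrand is, up to $\b W_t$‑factors, the first‑order term in the expansion of the relative log‑growth in the direction of the endogenous portfolio at the self‑consistent point $\bar\beta_t=\bhb_t$. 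Crucially, one exploits the coupling $\bar\beta_t=r_t^1\bhb_t+(1-r_t^1)\beta'$ (with $\beta'\in\bB_t$ the wealth‑weighted average of the competitors' endogenous proportions): the denominators $\bar\beta_t^n$ in $R$ are themselves partly induced by $\hat\bH$, and it is this coupling that converts the two first‑order conditions into the global inequality $\E_t(\bar R/R)\le1$, valid against every competing profile.

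The main obstacle is precisely this last step. As elementary examples show, $\hat\bH$ is \emph{not} a growth‑optimal (\numer) portfolio against arbitrary endogenous prices, so $\E_t(\bar R/R)\le1$ cannot be obtained from a fixed‑price optimality argument; the proof must use that the prices $\bar\beta_t^n W_t$ entering $R$ form a convex combination having $\bhb_t$ as one component, and must combine the Stage‑1 and Stage‑2 first‑order conditions along this coupling. A second, pervasive difficulty is the possible non‑integrability of logarithmic asset returns, which is the very reason for the regularized, two‑stage construction in Lemma~\ref{lemma-alpha-beta} and which makes the passage to the limit $i_j\to\infty$ in the first‑order conditions require careful monotone/dominated‑convergence arguments. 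The remaining points---measurability of all objects (they need be defined only on $\Thetabp_t$), the indeterminacy conventions $0/0=0$ and $a\ln0=-\infty$, and degenerate cases such as $r_t^1=1$ or $\s e{\bha_t}=1$---are routine.
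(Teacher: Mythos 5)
Your overall architecture (pass to a realization, prove positivity by induction, reduce the submartingale property to a one-step conditional inequality, use the $g_i$-regularized first-order conditions of Lemma~\ref{lemma-alpha-beta}) matches the paper's, but the proof has a genuine gap exactly where you yourself locate ``the heart of the proof.'' Your route reduces everything, via $\ln \bar R\le \ln R+(\bar R-R)/R$, to the inequality $\E_t(\bar R/R)\le 1$, where $R$ is the return of $\hat\bH$ and $\bar R$ the market return, \emph{both evaluated at the realized prices} $\bar\beta_t^nW_t$. This inequality is never derived: you only state that it ``must'' follow by combining the Stage-1 and Stage-2 first-order conditions ``along the coupling'' $\bar\beta_t=r_t^1\bhb_t+(1-r_t^1)\beta'$, after correctly observing that a fixed-price num\'eraire argument cannot give it. As written this is a proof plan, not a proof: the first-order conditions you have at your disposal (in the paper, inequalities \eqref{alpha-ineq} and \eqref{beta-ineq}) are stated with the denominator $\s{\bha_t}{X_{t+1}}\b W_t+|\btY_{t+1}|$, i.e.\ with the prices \emph{induced by $\hat\bH$ itself}, and it is not explained how they yield a ratio bound in which the denominator carries the realized proportions $\bar\beta_t^n$. (The claim is plausible --- in the pure endogenous case with $\bhb_t^n=\E_t(Y_{t+1}^n/|Y_{t+1}|)$ it follows from Cauchy--Schwarz, since $1\le(\sum_n \tfrac{\hat\beta^n}{\bar\beta^n}u^n)(\sum_n\tfrac{\bar\beta^n}{\hat\beta^n}u^n)$ for a probability vector $u$ --- but in the mixed exogenous/endogenous, constrained setting it is an unproven assertion, and it is strictly stronger than what is needed.)

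The paper avoids this obstacle by never proving an expectation-of-ratio bound at realized prices. It writes $\ln(\hat r_{t+1}/\hat r_t)=f_t^{(1)}+f_t^{(2)}$ as in \eqref{f1-f2}, inserting the intermediate quantity $\s{\bha_t}{X_{t+1}}\b W_t+|\btY_{t+1}|$: the first term compares $\bha_t$ with $\bar\alpha_t$ with the endogenous payoffs aggregated as $|\btY_{t+1}|$, the second compares $\s{F_t}{\btY_{t+1}}$ with $|\btY_{t+1}|$ at fixed $\bha_t$. Applying $\ln x\ge 1-x^{-1}$ (resp.\ concavity of $\ln$) termwise puts each expectation into exactly the form of \eqref{alpha-ineq} (resp.\ \eqref{beta-ineq}), with lower bounds $\s e{\bha_t-\bar\alpha_t}$ and $|\bhb_t|-|\bar\beta_t|$ that cancel because $\s e{\alpha_t^m}+|\beta_t^m|=1$; integrability is then settled from non-positivity of $\ln\hat r_t$. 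If you want to keep your route, you must actually prove $\E_t(\bar R/R)\le1$ in the general constrained setting; otherwise the argument should be restructured along the paper's decomposition. Two smaller points: your positivity step needs the observation (implicit in \eqref{beta-def} and recorded before \eqref{corollary}) that $\bhb_t^n=0$ forces $\P_t(\btY_{t+1}^n=0)=1$, so that positivity of $\s{\bha_t}{X_{t+1}}\b W_t+|\btY_{t+1}|$ really transfers to agent~1's return; and writing the target as $\E_t\ln R\ge\E_t\ln\bar R$ presumes both conditional expectations are well defined, so it is safer to work with the difference $\ln\hat r_{t+1}-\ln\hat r_t$, which is bounded above by the $\F_t$-measurable variable $-\ln\hat r_t$.
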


Note that Lemma~\ref{lemma-alpha-beta} defines $\bha_t,\bhb_t$ not
necessarily in a unique way (hence, we write ``any strategy $\hat\bH$'' in
the theorem). This may be so if, for example, some of the vectors $X_t$ have
linearly dependent components.

Let us show that the strategy $\hat\bH$ can be found as an equilibrium
strategy of the representative agent who holds a growth optimal portfolio in
a market with $N_1+N_2$ exogenous assets, where the first $N_1$ assets are
the same as in the original market, and the remaining $N_2$ assets are
treated as exogenous with the prices being equal to the prices of the
endogenous assets induced by $\hat\bH$ in the original market. This notion
of equilibrium is conceptually similar to the one in the Lucas model of
an exchange economy \citep{Lucas78} with the logarithmic utility, though we do
not consider consumption.

Recall that in a market with exogenous prices a strategy with value $\hat
v_t> 0$ is called a growth optimal portfolio (or a \numer\ portfolio) if
for any other strategy with value $v_t\ge0$ it holds that ${v_t}/{\hat v_t}$
is a supermartingale. If $z_t = (v_{t}-v_{t-1})/v_{t-1}$ and $\hat z_t =
(\hat v_t-\hat v_{t-1})/\hat v_{t-1}$ denote the one-period returns on the
strategies' portfolios, then this supermartingality condition is equivalent
to that for each $t\ge0$
\begin{equation}
\label{numeraire-return}
\E_t \frac{1+z_{t+1}}{1+\hat z_{t+1}} \le 1.
\end{equation}

Let $\hat{\b p}_t$ denote the endogenous prices that would clear the market
if all the agents used the strategy $\hat\bH$, i.e.
\[
\hat{\b p}_t^n = \bhb_t^n \b W_t.
\]
Let $\b Z_t$ denote the returns on the endogenous assets in this case,
\[
\b Z_{t+1}^n = \frac{Y_{t+1}^n}{\hat{\b p}_t^n}.
\]
Consequently, the return on a portfolio $(\b\alpha_t, \b\beta_t)$ would be
\begin{equation}
\s{\b\alpha_t}{X_{t+1}} -1 + \s{\b\beta_t}{\b
Z_{t+1}}.\label{portfolio-return}
\end{equation}

\begin{theorem}
\label{th-numeraire}
For any $t\ge 0$, $\chi\in\Thetabp_{t}$, and $(\alpha,\beta)\in \bC_t(\chi)$,
we have $($cf.~\eqref{numeraire-return},~\eqref{portfolio-return}$)$
\begin{equation}
\E_t\frac{\s{\alpha}{X_{t+1}} + \s{\beta}{\b Z_{t+1}}}{\s{\bha_t}{X_{t+1}} +
\s{\bhb_t}{\b Z_{t+1}}} \le 1.\label{numeraire}
\end{equation}
If $\E_t|\ln (\s{\bha_t}{X_{t+1}} + \s{\bhb_t}{\b Z_{t+1}})| < \infty$, then
\begin{equation}
(\bha_t,\bhb_t) \in \argmax_{(\alpha,\beta) \in \bC_t} \E_t \ln
(\s\alpha{X_{t+1}} + \s\beta{\b Z_{t+1}}).\label{numeraire-2}
\end{equation}
\end{theorem}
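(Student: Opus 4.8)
The plan is to recast \eqref{numeraire} as an inequality involving only $X_{t+1}$ and $\btY_{t+1}$, prove it by combining first--order optimality conditions from the two stages of Lemma~\ref{lemma-alpha-beta}, and then deduce \eqref{numeraire-2} from \eqref{numeraire} by conditional Jensen. Throughout fix $t$ and $\chi\in\Thetabp_t$, and write $R(\alpha,\beta)=\s\alpha{X_{t+1}}+\s\beta{\b Z_{t+1}}$, $\hat R=\s{\bha_t}{X_{t+1}}+\s{\bhb_t}{\b Z_{t+1}}$.

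\textbf{Step 1 (reduction).} First I would record the consequences of the construction of $\btY_{t+1}$ and of \eqref{beta-def}: the objective in \eqref{beta-def} is $-\infty$ unless $\bhb_t^n>0$ whenever $\P_t(\btY_{t+1}^n>0)>0$ and some feasible $\beta$ has $\beta^n>0$; together with the conventions $0/0=0$, etc., and with $\hat R>0$ $\P_t$-a.s.\ (which follows from assumption~\ref{assumption-2} and the construction), this gives, $\P_t$-a.s.,
\[
\b W_t\hat R=\s{\bha_t}{X_{t+1}}\b W_t+|\btY_{t+1}|=:\hat D,\qquad
\b W_t R(\alpha,\beta)=\s\alpha{X_{t+1}}\b W_t+\sum_n\frac{\beta^n\btY_{t+1}^n}{\bhb_t^n}.
\]
Since $\b W_t>0$ on $\Thetabp_t$, dividing shows \eqref{numeraire} is equivalent to $\E_t\bigl(\s\alpha{X_{t+1}}\b W_t+\sum_n\beta^n\btY_{t+1}^n/\bhb_t^n\bigr)/\hat D\le 1$, which I would split into an $\alpha$-part and a $\beta$-part.

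\textbf{Step 2 (first--stage condition).} This is the main work. For each $i$ the concavity of $\ln g_i$ makes \eqref{max-gi} concave, so its optimality condition reads $\E_t[\tfrac{g_i'(\hat D_i)}{g_i(\hat D_i)}\s{\alpha'-\bha_{t,i}}{X_{t+1}}\b W_t]\le\s e{\alpha'-\bha_{t,i}}$ for $\alpha'\in\bpA_t$, with $\hat D_i:=\s{\bha_{t,i}}{X_{t+1}}\b W_t+|\btY_{t+1}|$. Along the subsequence $i_j$, using $\bha_{t,i_j}\to\bha_t$ (Lemma~\ref{lemma-alpha-beta}(b)), the bound $\hat D_{i_j}\le\bigl(\max_{\alpha\in\bpA_t}|\s\alpha{X_{t+1}}|\bigr)\b W_t+|\btY_{t+1}|<\infty$ from compactness of $\bpA_t$ (assumption~\ref{assumption-4}), and the elementary facts $g_i(x)\to x$, $g_i'(x)\to 1$, $0\le\tfrac{g_i'(x)}{g_i(x)}x\le\tfrac32$ for $x$ in bounded sets and $i$ large, I would pass to the limit --- Fatou for the $\alpha'$-term, dominated convergence for the $\bha_{t,i_j}$-term (dominated eventually by $\tfrac32$ using $0\le\s{\bha_{t,i_j}}{X_{t+1}}\b W_t\le\hat D_{i_j}$) --- to obtain
\[
\E_t\bigl(\s{\alpha'}{X_{t+1}}\b W_t/\hat D\bigr)-\s e{\alpha'}\le\E_t\bigl(\s{\bha_t}{X_{t+1}}\b W_t/\hat D\bigr)-\s e{\bha_t}=:V,\qquad\alpha'\in\bpA_t,
\]
the Fatou step also showing $\E_t(\s{\alpha'}{X_{t+1}}\b W_t/\hat D)<\infty$. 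The left side is linear in $\alpha'$, so $0\in\bpA_t$ gives $V\ge0$; on the other hand $\E_t(\s{\bha_t}{X_{t+1}}\b W_t/\hat D)=1-\E_t(|\btY_{t+1}|/\hat D)\le1$, and since $(1+\varepsilon)\bha_t\in\bpA_t$ for small $\varepsilon>0$ when $\s e{\bha_t}<1$ (by \eqref{A-cone}, which $\bpA_t$ inherits), the maximality of $V$ forces $V\le0$ --- the case $\s e{\bha_t}=1$ being immediate and forcing $\btY_{t+1}=0$ $\P_t$-a.s. Hence $V=0$, i.e.\ $\E_t(\s{\bha_t}{X_{t+1}}\b W_t/\hat D)=\s e{\bha_t}$ and $\E_t(|\btY_{t+1}|/\hat D)=1-\s e{\bha_t}$. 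Writing any $\alpha$ occurring in $\bC_t$ as $\alpha^\perp+u$ with $\alpha^\perp\in\bpA_t$, $u\in L_t$ (assumption~\ref{assumption-3}), so that $\s\alpha{X_{t+1}}=\s{\alpha^\perp}{X_{t+1}}$ $\P_t$-a.s.\ and $\s e\alpha=\s e{\alpha^\perp}$, the displayed inequality together with $V=0$ yields $\E_t(\s\alpha{X_{t+1}}\b W_t/\hat D)\le\s e\alpha$.

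\textbf{Step 3 (second--stage condition and conclusion).} The optimality of $\bhb_t$ over the compact convex set $\tilde\bB_t$ gives (interchanging the one-sided derivative of $\beta\mapsto\s{\ln\beta}{\btY_{t+1}}$ with $\E_t$ by monotone convergence) $\E_t\bigl(\sum_n\tfrac{\btY_{t+1}^n}{\bhb_t^n}\beta'^n\bigr)/\hat D\le\E_t(|\btY_{t+1}|/\hat D)=1-\s e{\bha_t}$ for $\beta'\in\tilde\bB_t$. For $(\alpha,\beta)\in\bC_t$ with $\s e\beta>0$, applying this to $\beta'=\tfrac{1-\s e{\bha_t}}{\s e\beta}\beta\in\tilde\bB_t$ (admissible by \eqref{B-cone}) and rescaling gives $\E_t\bigl(\sum_n\beta^n\btY_{t+1}^n/\bhb_t^n\bigr)/\hat D\le\s e\beta=1-\s e\alpha$; when $\s e\beta=0$ then $\beta=0$ and the left side is $0$. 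Adding this to the last inequality of Step~2 gives exactly the reduced inequality of Step~1, proving \eqref{numeraire}. Finally, if $\E_t|\ln\hat R|<\infty$, then for $(\alpha,\beta)\in\bC_t$ one has $\E_t(R(\alpha,\beta)/\hat R)\le1$ by \eqref{numeraire} and $\E_t\ln^+(R(\alpha,\beta)/\hat R)\le\E_t(R(\alpha,\beta)/\hat R)<\infty$, so conditional Jensen gives $\E_t\ln(R(\alpha,\beta)/\hat R)\le\ln\E_t(R(\alpha,\beta)/\hat R)\le0$; splitting the logarithm (all positive parts integrable) yields $\E_t\ln R(\alpha,\beta)\le\E_t\ln\hat R$, which is \eqref{numeraire-2}.

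\textbf{Main obstacle.} The delicate point is the limiting argument in Step~2. The functions $g_i$ are engineered exactly for it --- $\ln g_i$ bounded makes every level-$i$ expectation finite, concavity makes the optimality condition usable, and the $\arctan$-saturation combined with the compactness of $\bpA_t$ supplies the domination needed to pass the $\bha_{t,i_j}$-term to the limit --- but carefully assembling the Fatou and dominated-convergence estimates, and verifying that $\hat D>0$ $\P_t$-a.s.\ (which underlies every division in the argument and is where assumption~\ref{assumption-2} is really used), requires work.
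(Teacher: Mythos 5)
Your proposal is correct and takes essentially the same route as the paper: the inequalities you derive in Steps 2 and 3 are exactly the paper's \eqref{alpha-ineq}, \eqref{alpha-eq} and \eqref{corollary}, which the paper establishes (as auxiliary lemmas for Theorem~\ref{theorem}) by the same $g_i$-truncation with Fatou/dominated convergence and the same directional-derivative argument, and its proof of Theorem~\ref{th-numeraire} is then precisely ``note $\hat{\b p}_t^n=\bhb_t^n\b W_t$, add the two inequalities, apply Jensen.'' The only difference is packaging: you re-prove those inequalities inline (including the positivity of $\s{\bha_t}{X_{t+1}}\b W_t+|\btY_{t+1}|$, which the paper also obtains from assumption~\ref{assumption-2} via the Fatou step you flag), whereas the paper simply cites them.
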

Relation \eqref{numeraire} expresses the above-mentioned idea of
equilibrium, and relation \eqref{numeraire-2} is an analogue of the
well-known fact that a \numer\ portfolio maximizes one-period expected
log-returns, under the respective integrability condition.

\begin{remark}
\label{technical}
Let us comment on technical aspects of the above results. Why in
Lemma~\ref{lemma-alpha-beta} do we introduce the functions $g_i$ and
consider maximization problem \eqref{max-gi}? Actually, we would like to
find a strategy $(\bha,\bhb)$ such that
\begin{equation}
\label{essence-alpha}
\text{$\bha_t$ maximizes}\ \E_{t} \ln (\s\alpha{X_{t+1}}\b W_{t} + |Y_{t+1}|) -
\s e\alpha\ \text{over $\alpha\in \bA_t$},
\end{equation}
and, for this $\bha_t$, to define the component $\bhb_t$ as in
\eqref{beta-def}. This strategy would satisfy inequalities
\eqref{alpha-ineq} and \eqref{beta-ineq}, which play the key role in the
proofs.

But it may be not possible to define $\bha_t$ in this way, since problem
\eqref{essence-alpha} may have no solution. For this reason, we find the
solutions $\bha_{t,i}$ of the maximization problems truncated by the
functions $g_i$ and select a convergent subsequence. Then inequalities
\eqref{alpha-ineq},~\eqref{beta-ineq} still remain satisfied. To ensure that
such a subsequence exists, we use the observation that it is possible to
maximize not over the whole set $\bA_t$ but over its compact subset
$\bpA_t$. We also replace $\b Y_t$ with $\btY_t$ to avoid the situation when
an asset yields a positive payoff with positive conditional probability, but
it is not possible to invest in it.
\end{remark}

Note that when no portfolio constraints are imposed on the endogenous
assets, i.e.\ $\bB_t =\{\beta\in \R_{+}^{N_2} : |\beta|\le1\}$ and hence
$\btY_{t+1} = Y_{t+1}$, we can find $\bhb_t$ explicitly:
\begin{equation}
\bhb_t^n = \E_{t} \frac{Y_{t+1}^n}{\s{\bha_t}{X_{t+1}}\b W_{t} +
|Y_{t+1}|}\label{beta-explicit}
\end{equation}
(this formula will be used in Section~\ref{section-dominance}). Indeed, for
$\bhb_t$ defined by \eqref{beta-explicit}, we have $|\bhb_t| = 1-
\s{e}{\bha_t}$ as follows from equality \eqref{alpha-eq} below, and for any
$\beta\in\R_+^{N_2}$ with $|\beta| = 1- \s{e}{\bha_t}$ we have
\[
\E_{t}\frac{\s {\ln \beta}{Y_{t+1}}}{\s{\bha_t}{X_{t+1}}\b W_{t}
+ |Y_{t+1}|} = \s {\ln \beta}{\bhb_t} \le  \s {\ln\bhb_t}{\bhb_t},
\]
so $\bhb_t$ indeed delivers the maximum in \eqref{beta-def}. The inequality
here follows from Gibb's inequality (see Proposition~\ref{lemma-ineqs} below).

To conclude this section, let us show how the above theorems generalize
known results on asymptotically optimal strategies. An immediate corollary
from Theorem~\ref{th-numeraire} is that in a market with only exogenous
assets the strategy $\hat\alpha_t$ is a \numer\ portfolio. Note that in
this case $\hat\alpha_t$ depends only on $\omega$, but not on market
history (assuming that the constraints set $\bC_t$ also depend only on
$\omega$).

In a market with only endogenous assets and no portfolio constraints, as
follows from \eqref{beta-explicit}, the optimal strategy is given by
\[
\hat\beta_t^n = \E_{t}\frac{Y_{t+1}^n}{|Y_{t+1}|}
\]
(note that again we have the dependence on $\omega$ only). This strategy was
obtained by \cite{AmirEvstigneev+13}; see also the earlier results of
\cite{EvstigneevHens+02,AmirEvstigneev+05,HensSchenkHoppe05} for models with
short-lived assets which impose additional assumptions on admissible
strategies or on asset payoffs.

Finally, suppose that there is only one exogenous asset, short sales of this
asset are not allowed, and there are no other portfolio constraints, i.e.\
$\bC_t = \R_+\times \R_+^{N_2}$. Then $\bha_t(\chi)$ is defined as follows:
if $\E_{t} (X_{t+1} \b W_{t} / |Y_{t+1}|) \le 1$, then $\bha_t =0$;
otherwise $\bha_t$ is the unique solution of the equation
\[
\E_{t} \frac{X_{t+1}\b W_{t} }{ \alpha + |Y_{t+1}|} = 1.
\]
This can be seen from relations \eqref{alpha-ineq}--\eqref{alpha-eq} below.
Indeed, if $\E_{t} (X_{t+1}\b W_{t} / |Y_{t+1}|) \le 1$, then equality
\eqref{alpha-eq} can be true only if $\bha_t=0$. In the case $\E_{t}
(X_{t+1}\b W_{t} / |Y_{t+1}|) > 1$, equality \eqref{alpha-eq} has two
solutions, the zero one and a non-zero one. But if $\bha_t=0$, then
\eqref{alpha-ineq} cannot hold true for $\alpha>0$, hence we are left only
with the non-zero solution.

After $\bha_t$ has been defined as above, the component $\bhb_t$ can be
found from \eqref{beta-explicit}, which gives
\[
\bhb_t^n = \E_{t} \frac{Y_{t+1}^n}{ \bha_t X_{t+1}\b W_{t} + |Y_{t+1}|}.
\]
This strategy was obtained by \cite{DrokinZhitlukhin20} in the case when the
sequence $X_t$ is predictable (e.g.\ the exogenous asset is a risk-free bond
or cash); see \cite{Zhitlukhin19,Zhitlukhin20b} for its extensions to
continuous time.

\subsection{Asymptotic proximity of survival strategies}
\label{section-dominance}
In this section we investigate evolution of relative wealth of strategies
different from $\hat\bH$. The theorem below will be stated for the case when
there are no portfolio constraints on the endogenous assets ($\bB_t
=\{\beta\in \R_{+}^{N_2} : |\beta|\le1\}$). This assumption is necessary
because the proof relies on the explicit form of $\bhb_t$ given
by~\eqref{beta-explicit}.

Given a feasible strategy profile and a vector of initial wealth, by $\bar h=(\bar
\alpha,\bar \beta)$ we will denote the realization of the representative
strategy of all the agents, which we define as the weighted sum of their
strategies with $r_t^m$ as the weights:
\[
\bar \alpha_t = \sum_{m=1}^M r_t^m  \alpha_t^m, \qquad
\bar \beta_t = \sum_{m=1}^M r_t^m  \beta_t^m,
\]
where $\alpha_t,\beta_t,r_t$ are the corresponding realizations. In a
similar way, by $\tilde h=(\tilde \alpha,\tilde \beta)$ we will denote the
realization of the representative strategy of agents $m=2,\ldots,M$
weighted with their relative wealths excluding agent 1:
\[
\tilde \alpha_t = \sum_{m=2}^M \frac{r_t^m}{1-r_t^1}  \alpha_t^m, \qquad
\tilde \beta_t = \sum_{m=2}^M \frac{r_t^m}{1-r_t^1}  \beta_t^m,
\]
where $0/0=0$.

\begin{theorem}
\label{th-dominance}
Suppose $\bB_t =\{\beta\in \R_{+}^{N_2} : |\beta|\le1\}$, and agent 1
uses the strategy $\bH^1=\hat\bH$. Considering the realizations of the
strategies, the wealth sequences, and the constraints sets, let
\[
Q_{t+1}(\omega) = \max_{\alpha\in A_{t}(\omega)} \s{\alpha}{X_{t+1}(\omega)} +
\frac{| Y_{t+1}(\omega)|}{W_{t}(\omega)}.
\]
Then, with probability 1,
\begin{equation}
\label{proximity}
\sum_{t=0}^\infty \biggl(\frac{\s{\alpha_{t}^1 -
\bar\alpha_{t}}{X_{t+1}}}{Q_{t+1}}\biggr)^2 + \|\beta_{ t}^1 -
\bar\beta_{t}\|^2 < \infty,
\end{equation}
and
\begin{equation}
\lim_{t\to\infty}r^1_t= 1\ \text{on the set}\ \biggl\{\omega: \sum_{t=0}^\infty \biggl(\frac{\s{\alpha_{t}^1 -
\tilde\alpha_{t}}{X_{t+1}}}{Q_{t+1}}\biggr)^2 + \|\beta_{ t}^1 -   \tilde\beta_{t}\|^2 =\infty\biggr\}.
\label{dominance}
\end{equation}
\end{theorem}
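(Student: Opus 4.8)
The plan is to exploit Theorem~\ref{th-numeraire} (with $\hat{\b p}_t^n = \bhb_t^n \b W_t$), which identifies $\hat\bH$ with a \numer\ portfolio in the auxiliary market with exogenous prices, so that $r_t^1 = \hat v_t / W_t = \hat v_t / \sum_m v_t^m$ and each ratio $v_t^m/\hat v_t$ is a supermartingale. First I would write the one-step relation for $r_t^1$: since $\hat v_{t+1}/\hat v_t = \s{\bha_t}{X_{t+1}} + \s{\bhb_t}{\b Z_{t+1}}$ and $v_{t+1}^m/v_t^m$ equals the corresponding expression with $(\alpha_t^m,\beta_t^m)$, one gets $1/r_{t+1}^1 = \sum_m (v_t^m/\hat v_t)(v_{t+1}^m/\hat v_{t+1})/(v_t^m/\hat v_t)$... more usefully, $1/r_{t+1}^1 = \sum_m r_t^m \cdot \xi_t^m$ where $\xi_t^m = (\s{\alpha_t^m}{X_{t+1}} + \s{\beta_t^m}{\b Z_{t+1}})/(\s{\bha_t}{X_{t+1}} + \s{\bhb_t}{\b Z_{t+1}})$, using $r_t^m = (v_t^m/\hat v_t) r_t^1$ and $\sum_m r_t^m = 1$. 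Then $\E_t(1/r_{t+1}^1) = \sum_m r_t^m \E_t \xi_t^m \le \sum_m r_t^m = 1 = 1/r_t^1 \cdot r_t^1$, i.e.\ $1/r_t^1$ is a nonnegative supermartingale. This already reproves survival of $\hat\bH$; the point now is to track the defect in these inequalities.

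The core step is a quantitative strengthening of \eqref{numeraire} that converts the gap $1 - \E_t\xi_t^m$ into a quadratic lower bound in the discrepancy between $(\alpha_t^m,\beta_t^m)$ and $(\bha_t,\bhb_t)$. Here the explicit form \eqref{beta-explicit} of $\bhb_t$ (valid because $\bB_t$ is the full simplex) is essential: it gives $\bhb_t^n = \E_t\bigl(Y_{t+1}^n / (\s{\bha_t}{X_{t+1}}\b W_t + |Y_{t+1}|)\bigr)$, and $\b Z_{t+1}^n = Y_{t+1}^n/(\bhb_t^n \b W_t)$, so the denominator $\s{\bha_t}{X_{t+1}} + \s{\bhb_t}{\b Z_{t+1}}$ simplifies to $(\s{\bha_t}{X_{t+1}}\b W_t + |Y_{t+1}|)/\b W_t$ — call this $\b N_{t+1}/\b W_t$. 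With this, $1 - \E_t\xi_t^m$ can be rewritten using the Gibbs-type inequality of Proposition~\ref{lemma-ineqs} for the $\beta$-part and the first-order optimality inequality \eqref{alpha-ineq} (referenced in Remark~\ref{technical}) for the $\alpha$-part; concavity of $\ln$ / strict concavity estimates upgrade these to the squared terms $(\s{\alpha_t^m - \bha_t}{X_{t+1}}/\b N_{t+1})^2$ and $\|\beta_t^m - \bhb_t\|^2$. Summing the $r_t^m$-weighted version over $m$, and noting $\sum_m r_t^m(\alpha_t^m,\beta_t^m) = (\bar\alpha_t,\bar\beta_t)$ together with convexity of $x\mapsto x^2$, yields
\[
1/r_t^1 - \E_t(1/r_{t+1}^1) \ge c\,\Bigl(\bigl(\tfrac{\s{\bar\alpha_t - \bha_t}{X_{t+1}}}{\b N_{t+1}/\b W_t}\bigr)^2 + \|\bar\beta_t - \bhb_t\|^2\Bigr)
\]
for a universal constant $c>0$; then by Doob's decomposition and convergence of the compensator of the supermartingale $1/r_t^1$, the series $\sum_t$ of the right side is a.s.\ finite. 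Finally, since agent~1 is $\hat\bH$, we have $\bar\alpha_t - \bha_t = (1-r_t^1)(\tilde\alpha_t - \bha_t)$, and replacing $\b N_{t+1}/\b W_t$ by the dominating quantity $Q_{t+1}$ (which differs by a bounded factor on the relevant event, using $\max_\alpha$ over $A_t$ and that $\s{\bha_t}{X_{t+1}} \le \max_\alpha\s{\alpha}{X_{t+1}}$ while $|\btY| = |Y|$ here) gives \eqref{proximity} directly once one also checks $\|\bar\beta_t - \bhb_t\| = (1-r_t^1)\|\tilde\beta_t - \bhb_t\|$ and $1-r_t^1 \le 1$.

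For the dominance claim \eqref{dominance} I would argue by contradiction on the event where the $\tilde h$-series diverges: there $1-r_t^1 \not\to 0$ would force, via the same quadratic bound now applied with weights picking out agents $2,\dots,M$ (so the discrepancy is $(1-r_t^1)(\tilde\alpha_t - \bha_t)$ etc.), the divergence of $\sum_t (1-r_t^1)^2 \bigl((\s{\tilde\alpha_t - \bha_t}{X_{t+1}}/Q_{t+1})^2 + \|\tilde\beta_t - \bhb_t\|^2\bigr)$; but the convergent-compensator argument shows this sum is finite a.s., so on the divergence event $\liminf(1-r_t^1) = 0$, and combined with the supermartingale convergence of $1/r_t^1$ (hence convergence of $r_t^1$) we get $r_t^1 \to 1$. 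The main obstacle I expect is the quantitative step: getting a genuinely \emph{quadratic} (not merely strictly positive) lower bound on $1 - \E_t\xi_t^m$ uniformly in the market state, handling the possible non-integrability of $\ln$ (which is exactly why the $g_i$-truncation and the inequalities \eqref{alpha-ineq}--\eqref{alpha-eq} from the proof of Lemma~\ref{lemma-alpha-beta}, rather than \eqref{numeraire-2}, must be used), and controlling the normalization $\b N_{t+1}$ versus $Q_{t+1}$ on the complement of $\{r_t^1 = \text{something}\}$ without extra integrability assumptions.
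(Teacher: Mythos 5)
Your second half (quadratic strengthening via Proposition~\ref{lemma-ineqs}, convergence of the compensator, the factorization $\hat\alpha_t-\bar\alpha_t=(1-r^1_t)(\hat\alpha_t-\tilde\alpha_t)$ and the final limit argument) is essentially the paper's route. The gap is at the foundation of your plan: you base everything on the identity $\hat v_{t+1}/\hat v_t=\s{\bha_t}{X_{t+1}}+\s{\bhb_t}{\b Z_{t+1}}$ and on ``$\E_t\xi^m_t\le 1$ by Theorem~\ref{th-numeraire}'', but in the actual strategy profile the endogenous assets trade at the equilibrium prices $p_t^n=\sum_m\beta_t^{m,n}v_t^m=\bar\beta_t^n W_t$ (see \eqref{prices}), not at the hypothetical prices $\hat{\b p}_t^n=\bhb_t^n\b W_t$ that define $\b Z_{t+1}$ and for which \eqref{numeraire} is proved. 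Agent 1's actual one-period growth factor is $\s{\hat\alpha_t}{X_{t+1}}+\sum_n\hat\beta_t^nY_{t+1}^n/(\bar\beta_t^nW_t)$, and the return ratios entering $1/r^1_{t+1}$ involve $\bar\beta_t$ in the denominators. Hence neither your one-step decomposition of $1/r^1_{t+1}$ (which is also off by a factor $1/r^1_t$) nor the supermartingale property of $1/r^1_t$ follows from Theorem~\ref{th-numeraire}; the claim ``$v_t^m/\hat v_t$ is a supermartingale'' at actual market prices is a genuinely stronger statement than anything established in the paper, and you give no argument for it. This is precisely why the paper works instead with the non-positive submartingale $\ln\hat r_t$: the actual prices enter only through $F_t^n=\hat\beta_t^n/\bar\beta_t^n$ in the decomposition \eqref{f1-f2}, where the $\alpha$-part is controlled by \eqref{alpha-ineq} and the $\beta$-part by the Gibbs-type inequality \eqref{beta-ineq}/\eqref{logsum} on the log scale, and the quadratic strengthening \eqref{log-ineq} applied to $a,b$ normalized by $Q_{t+1}$ gives the compensator bound $c_{t+1}-c_t\ge\bigl(\s{\hat\alpha_t-\bar\alpha_t}{X_{t+1}}/(2Q_{t+1})\bigr)^2+\|\hat\beta_t-\bar\beta_t\|^2/4$.

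If you want to salvage your ratio-based route, you would have to prove the numéraire-type inequality at the \emph{actual} clearing prices, e.g.
\[
\E_t\,\frac{\s{\bar\alpha_t}{X_{t+1}}W_t+|Y_{t+1}|}{\s{\hat\alpha_t}{X_{t+1}}W_t+\sum_n\hat\beta_t^nY_{t+1}^n/\bar\beta_t^n}\le 1,
\]
together with a quantitative (quadratic) defect version of it, uniformly in $\bar h_t$ and without integrability assumptions; none of this is supplied, and it does not reduce to \eqref{numeraire}. Absent that, the compensator you propose to control is not known to exist as a convergent object, so \eqref{proximity} and \eqref{dominance} are not reached. The fix is simply to replace the reciprocal supermartingale by the paper's log-submartingale argument, which your later steps already implicitly follow.
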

Note that the maximum in the definition of $Q_{t+1}$ is attained because,
according to Proposition~\ref{prop-na}, it can be taken over the
compact set $\pA_t(\omega)$. Furthermore, $Q_{t+1}>0$ by
assumption~\ref{assumption-2}.

Relation \eqref{proximity} essentially shows that if one agent uses the
strategy $\hat\bH$ then this agent asymptotically determines the
representative strategy of the market so that $\bar h_t$ becomes close to
$\hat h_t$ in the sense that the series in \eqref{proximity} converges, and,
consequently,
\[
\frac{\s{\alpha_{t}^1 -
\bar\alpha_{t}}{X_{t+1}}}{Q_{t+1}} \to 0, \quad \beta_t^1 - \bar\beta_t \to
0\quad
\ \text{as $t\to\infty$}.
\]

Relation \eqref{dominance} provides a sufficient condition for an agent
using the strategy $\hat\bH$ to dominate in the market, which happens when
the realization of the representative strategy of the other agents is
asymptotically different from the realization of $\hat\bH$ in the sense that
the series in \eqref{dominance} diverges. From here, we also get a necessary
condition for a strategy to be survival. Indeed, a survival strategy must
survive against $\hat\bH$, so if agents $m=1,\ldots,M-1$ use the 
strategy $\hat\bH$, and hence can be considered as a single agent, the
remaining agent $m=M$ has to use a strategy with a realization close to
$\hat h$ in the sense of \eqref{dominance}.

Another corollary from Theorem~\ref{th-dominance} is that the presence of an
agent who uses the strategy $\hat\bH$ asymptotically determines the
relative prices $\rho_t^n = {p_t^n}/{W_t}$ of the endogenous assets. It is
not difficult to see that $\rho_t^n = \bar \beta_t^n$, and hence
\eqref{proximity} implies that for each $n$ and $t\to\infty$ we have
$\hat\beta_t^n - \rho_t^n \to 0$.

\section{Proofs of the main results}
\label{sec-proofs}
\subsection{Auxiliary results on random sets}
\label{sec-random-sets}
In this section we provide several results from the theory of random sets
which will be used in the proofs for dealing with portfolio constraints.

By a \emph{random set} (or a \emph{measurable correspondence}) in $\R^N$
defined on a measurable space $(S,\mathcal{S})$ we call a set-valued
function $\phi\colon S \to 2^{\R^N}$ such that for any open set $A\subseteq
\R^N$ it holds that $\phi^{-1}(A) \in \mathcal{S}$, where $\phi^{-1}(A) =
\{s: \phi(s) \cap A \neq\emptyset\}$ is the lower inverse of $A$. An
equivalent definition is that the distance function $d(x,\phi(s))$ is
$\mathcal{S}$-measurable for any $x\in\R^N$ (where $d(x,\emptyset)=\infty$).
In what follows, the role of $(S,\mathcal{S})$ will be played by
$(\Omega,\F_t)$, $(\Thetab, \G_t)$, or $(\Thetab, \G_t^-)$.

A random set is called closed (respectively, compact, non-empty) if
$\phi(s)$ is closed (compact, non-empty) for any $s\in\mathcal{S}$. A
\emph{measurable selector} is an $\mathcal{S}$-measurable function $\xi$
such that $\xi(s)\in \phi(s)$ for any $s$. A function $f(s,x)\colon S\times
\R^N\to\R$ is called a \emph{\carat\ function} if it is measurable in $s$
and continuous in $x$.

The following results are known for random sets in $\R^N$.

\begin{proposition}
If $\phi_n$, $n=1,2,\ldots$\,, are random sets, then $\cup_n \phi_n$ is a
random set; if $\phi_n$ are also closed, then $\cap_n\phi_n$ is a closed
random set.
\end{proposition}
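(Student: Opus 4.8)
I would prove the two assertions separately, the union being immediate and the intersection requiring a reduction from the countable to the two-set case. For the union, write $\psi=\bigcup_n\phi_n$; then for an open set $A\subseteq\R^N$ one has $\psi(s)\cap A\neq\emptyset$ iff $\phi_n(s)\cap A\neq\emptyset$ for some $n$, so $\psi^{-1}(A)=\bigcup_n\phi_n^{-1}(A)\in\mathcal{S}$ as a countable union of measurable sets; closedness of the $\phi_n$ is not used here. For the intersection $\psi=\bigcap_n\phi_n$ the closedness of each value $\psi(s)$ is obvious, so the only content is measurability, and I would establish it by reducing first to finite intersections and then to intersections of two sets.

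For the reduction, note that since $\R^N$ is $\sigma$-compact, every open $A$ is the union of the countably many closed balls $B=\bar B(q,\rho)$ with rational center and radius and $B\subseteq A$; hence $\psi^{-1}(A)=\bigcup_B\{s:\psi(s)\cap B\neq\emptyset\}$, and it suffices to show $\{s:\psi(s)\cap B\neq\emptyset\}\in\mathcal{S}$ for each such compact ball $B$. For fixed $B$ the sets $\phi_n(s)\cap B$ are compact, so by the finite intersection property $\psi(s)\cap B\neq\emptyset$ iff $\Gamma_N(s)\cap B\neq\emptyset$ for every $N$, where $\Gamma_N=\phi_1\cap\cdots\cap\phi_N$. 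If each $\Gamma_N$ is already known to be a closed random set, then by properness of $\R^N$ the distance to $\Gamma_N(s)$ is attained, so $\{s:\Gamma_N(s)\cap B\neq\emptyset\}=\{s:d(q,\Gamma_N(s))\le\rho\}\in\mathcal{S}$ (using that $s\mapsto d(q,\Gamma_N(s))$ is measurable), whence $\{s:\psi(s)\cap B\neq\emptyset\}=\bigcap_N\{s:d(q,\Gamma_N(s))\le\rho\}$ is measurable. By induction this reduces everything to the two-set statement: the intersection of two closed random sets is a closed random set.

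The two-set case is where the only genuine work lies. Let $\phi,\psi$ be closed random sets, fix a compact ball $B$, and set $m(s)=\inf_{y\in B}\bigl(d(y,\phi(s))+d(y,\psi(s))\bigr)$. The map $(s,y)\mapsto d(y,\phi(s))+d(y,\psi(s))$ is a \carat\ function — measurable in $s$ by the distance-function characterization of random sets, and $1$-Lipschitz, hence continuous, in $y$ — so its infimum over $B$ equals the infimum over a countable dense subset of $B$, and therefore $m$ is $\mathcal{S}$-measurable. Using compactness of $B$ and closedness of $\phi(s),\psi(s)$, one verifies that $m(s)=0$ exactly when $\phi(s)\cap\psi(s)\cap B\neq\emptyset$: if $m(s)=0$, choose $y_k\in B$ with $d(y_k,\phi(s))+d(y_k,\psi(s))\to0$, extract a subsequence $y_{k_j}\to y^\ast\in B$, and deduce from continuity of the two distance functions that $y^\ast$ lies in both closed sets. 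Hence $\{s:\phi(s)\cap\psi(s)\cap B\neq\emptyset\}=m^{-1}(\{0\})\in\mathcal{S}$, and taking the union over the rational balls $B\subseteq A$ gives $(\phi\cap\psi)^{-1}(A)\in\mathcal{S}$ for every open $A$; feeding this back through the reductions above completes the argument. The main obstacle is precisely this two-set intersection, and the device that resolves it is to replace the intractable set $\phi(s)\cap\psi(s)$ by the scalar $m(s)$, exploiting that $\R^N$ is proper so that near-minimizers on a ball subconverge.
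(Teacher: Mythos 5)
Your proof is correct, but it is genuinely different in character from what the paper does: the paper offers no argument at all beyond citing \citet[Chapter~18]{AliprantisBorder06} (where the union statement and the countable intersection of \emph{compact}-valued measurable correspondences are proved) and the one-line remark that the closed-valued case in $\R^N$ follows by $\sigma$-compactness, i.e.\ by intersecting with an exhausting sequence of compact balls and reducing to the cited compact case. You instead build a self-contained proof from the distance-function characterization: the union part is the same triviality in both routes, while for the intersection you reduce to hitting rational closed balls $B\subseteq A$, use the finite intersection property of the compact sets $\phi_n(s)\cap B$ together with attainment of $d(q,\Gamma_N(s))$ to pass from finite to countable intersections, and settle the two-set case via the scalar $m(s)=\inf_{y\in B}\bigl(d(y,\phi(s))+d(y,\psi(s))\bigr)$. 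This is sound (the convention $d(x,\emptyset)=\infty$ keeps empty values harmless, and only continuity in $y$ matters, so the slip that a sum of two $1$-Lipschitz functions is $2$-Lipschitz is immaterial). What each approach buys: the paper's route is short and leans on a standard reference, which is all that is needed there; yours is elementary, makes explicit exactly where local compactness and $\sigma$-compactness of $\R^N$ enter, and could even be streamlined further, since your two-set device applied to $m(s)=\inf_{y\in B}\sum_n 2^{-n}\min\bigl(d(y,\phi_n(s)),1\bigr)$ handles the countable intersection in one step, making the detour through finite intersections and the induction unnecessary.
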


\begin{proposition}[Filippov's theorem]
\label{prop-filippov}
Suppose $\phi$ is a non-empty compact random set, $f$ is a \carat\ function,
and $\pi$ is a measurable function. Then the correspondence
\[
\psi(s) = \{x\in \phi(s) : f(s,x)=\pi(s)\}
\]
is measurable and compact. Moreover, if $\psi$ is non-empty, then it has a
measurable selector $\xi$, and hence $f(s,\xi(s)) = \pi(s)$.
\end{proposition}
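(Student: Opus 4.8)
The plan is to realize $\psi$ as the intersection of the given compact random set $\phi$ with the \emph{level correspondence} $\Gamma(s)=\{x\in\R^N:f(s,x)=\pi(s)\}$, to show that $\Gamma$ is itself a closed random set, and then to read off the measurability and compactness of $\psi$ from the preceding proposition on intersections, and the existence of a selector from a classical selection theorem. Compactness of $\psi(s)$ is the easy part: since $f(s,\cdot)$ is continuous, $\Gamma(s)$ is closed, so $\psi(s)=\phi(s)\cap\Gamma(s)$ is a closed subset of the compact set $\phi(s)$, hence compact (and possibly empty).

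The crux is the measurability of $\Gamma$. By the equivalent characterization of measurability through distance functions recalled above, it suffices to check that $s\mapsto d(q,\Gamma(s))$ is $\mathcal S$-measurable for every $q$ in the countable set $\mathbb Q^N$; measurability of $d(x,\Gamma(\cdot))$ for an arbitrary $x$ then follows by approximating $x$ by rationals and using that $x\mapsto d(x,\Gamma(s))$ is $1$-Lipschitz. So fix $q\in\mathbb Q^N$ and a rational $\rho>0$. Since $\Gamma(s)$ is closed, $d(q,\Gamma(s))\le\rho$ holds iff $\Gamma(s)\cap\bar B(q,\rho)\neq\emptyset$, i.e.\ iff the continuous map $x\mapsto|f(s,x)-\pi(s)|$ has a zero on the compact ball $\bar B(q,\rho)$; equivalently,
\[
h_{q,\rho}(s):=\min_{x\in\bar B(q,\rho)}\bigl|f(s,x)-\pi(s)\bigr|=0 .
\]
By continuity of $f(s,\cdot)$ and density of $\bar B(q,\rho)\cap\mathbb Q^N$ in $\bar B(q,\rho)$, this minimum equals the countable infimum $\inf\{|f(s,x)-\pi(s)|:x\in\bar B(q,\rho)\cap\mathbb Q^N\}$; each term $s\mapsto|f(s,x)-\pi(s)|$ is $\mathcal S$-measurable because $f$ is a \carat\ function and $\pi$ is measurable, so $h_{q,\rho}$ is measurable and $\{s:d(q,\Gamma(s))\le\rho\}=h_{q,\rho}^{-1}(\{0\})\in\mathcal S$. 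Letting $\rho$ run through the rationals shows $d(q,\Gamma(\cdot))$ is measurable, so $\Gamma$ is a closed random set. Consequently $\psi=\phi\cap\Gamma$ is a closed random set by the preceding proposition and compact-valued by the previous paragraph, which is the first assertion.

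It remains to produce the selector. If $\psi(s)\neq\emptyset$ for every $s$, then $\psi$ is a non-empty closed-valued random set in the Polish space $\R^N$, and the Kuratowski--Ryll-Nardzewski selection theorem furnishes an $\mathcal S$-measurable $\xi$ with $\xi(s)\in\psi(s)$ for all $s$, i.e.\ $\xi(s)\in\phi(s)$ and $f(s,\xi(s))=\pi(s)$. I expect the measurability of $\Gamma$ to be the only genuine obstacle: the temptation is to pass to an infimum over an \emph{open} ball, but such an infimum need not be attained, so ``$h_{q,\rho}(s)=0$'' would no longer force an actual zero of $f(s,\cdot)-\pi(s)$ inside the ball — working with the \emph{compact} balls $\bar B(q,\rho)$ is exactly what repairs this. (Note that compactness of $\phi$ entered only in the selection step and in the compactness of $\psi(s)$, not in the measurability argument, so the same reasoning shows the level correspondence $\Gamma$ is always a random set by itself.)
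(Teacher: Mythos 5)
Your argument is correct. Note, however, that the paper does not actually prove this proposition: it is one of the results quoted from \citet[Chapter~18]{AliprantisBorder06} (Filippov's implicit function theorem there), so there is no in-paper proof to match; what you have supplied is a self-contained replacement for that citation. Your route --- writing $\psi=\phi\cap\Gamma$ with the level correspondence $\Gamma(s)=\{x: f(s,x)=\pi(s)\}$, proving that $\Gamma$ is a closed random set through measurability of the distance functions $d(q,\Gamma(\cdot))$ for rational $q$ (via the countable infimum of $|f(\cdot,x)-\pi(\cdot)|$ over rational points of closed balls), and then invoking the intersection proposition together with a selection theorem --- is sound, and your remark about why closed rather than open balls must be used is exactly the right point: with open balls a vanishing infimum need not produce a zero inside the ball. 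Two small observations. First, the equivalence ``$d(q,\Gamma(s))\le\rho$ iff $\Gamma(s)\cap\bar B(q,\rho)\neq\emptyset$'' uses not just closedness of $\Gamma(s)$ but attainment of the distance, i.e.\ compactness of closed balls in $\R^N$; this is harmless here but would fail in a general metric space, so it is worth saying explicitly since the rest of your measurability argument is dimension-free. Second, for the selector you appeal to Kuratowski--Ryll-Nardzewski, which is an external classical result; since the paper already quotes Castaing's theorem (Corollary~18.14 of Aliprantis--Border) in its own proofs, you could equally well take the selector from a Castaing representation of the non-empty closed random set $\psi$, which keeps the toolkit identical to the paper's. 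What your proof buys over the paper's bare citation is transparency about which hypotheses do what: compactness of $\phi$ enters only for compactness of $\psi(s)$ and for nothing in the measurability of $\Gamma$, matching your closing parenthetical.
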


\begin{proposition}[Measurable maximum theorem]
For a non-empty compact random set $\phi$ and a \carat\ function $f$, let
$\mu$ be the maximum function and $\psi$ be the argmax correspondence
defined by
\[
\mu(s) = \max_{x\in\phi(s)} f(s,x), \qquad
\psi(s) = \argmax_{x\in \phi(s)} f(s,x).
\]
Then $\mu$ is measurable, and $\psi$ is non-empty, compact, measurable, and
has a measurable selector.
\end{proposition}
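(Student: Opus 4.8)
The statement to prove is the Measurable Maximum Theorem: for a non-empty compact random set $\phi$ and a Carathéodory function $f$, the maximum function $\mu(s) = \max_{x\in\phi(s)} f(s,x)$ is measurable, and the argmax correspondence $\psi(s) = \argmax_{x\in\phi(s)} f(s,x)$ is non-empty, compact, measurable, and admits a measurable selector. This is a classical result (see e.g. Aliprantis--Border, \emph{Infinite Dimensional Analysis}, or Rockafellar--Wets, \emph{Variational Analysis}), and I would prove it by reducing it to Filippov's theorem, which has already been stated as Proposition~\ref{prop-filippov}.

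The plan is as follows. \emph{Step 1 (the maximum function is measurable).} Since $\phi(s)$ is non-empty and compact and $f(s,\cdot)$ is continuous, the maximum is attained, so $\mu$ is well-defined. To show measurability, I would use a countable-dense-selection argument: by the Kuratowski--Ryll-Nardzewski selection theorem (or by the fact that a measurable closed-valued correspondence into $\R^N$ admits a Castaing representation), there is a sequence of measurable selectors $\xi_k$ of $\phi$ such that $\{\xi_k(s) : k\ge 1\}$ is dense in $\phi(s)$ for every $s$. By continuity of $f(s,\cdot)$ we then have
\[
\mu(s) = \sup_{k\ge 1} f(s, \xi_k(s)),
\]
which is a countable supremum of measurable functions (each $s\mapsto f(s,\xi_k(s))$ is measurable because $f$ is Carathéodory and $\xi_k$ is measurable), hence measurable. \emph{Step 2 (the argmax is a random set).} Apply Filippov's theorem (Proposition~\ref{prop-filippov}) with the same $\phi$, the same Carathéodory function $f$, and $\pi(s) = \mu(s)$, which is measurable by Step~1. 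This immediately yields that
\[
\psi(s) = \{x \in \phi(s) : f(s,x) = \mu(s)\}
\]
is a measurable and compact correspondence. \emph{Step 3 (non-emptiness and selector).} Non-emptiness of $\psi(s)$ for each $s$ is exactly the attainment of the maximum, which holds by compactness of $\phi(s)$ and continuity of $f(s,\cdot)$ (Weierstrass). Since $\psi$ is non-empty, the last clause of Filippov's theorem gives a measurable selector $\xi$ with $f(s,\xi(s)) = \mu(s)$, i.e.\ $\xi(s)\in\psi(s)$ for all $s$.

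The only subtle point — and the step I would treat most carefully — is Step~1, specifically the existence of a countable family of measurable selectors whose values are dense in $\phi(s)$ pointwise. This is where one needs a genuine measurable-selection theorem (Castaing's representation) rather than just Filippov; everything else is a routine application of results already available in the excerpt. If one prefers to avoid invoking Castaing's representation directly, an alternative for Step~1 is to note that $\mu(s) = \sup\{ f(s,x) : x \in \phi(s)\}$ and that the epigraph-type set $\{(s,x) : x\in\phi(s)\}$ is measurable in the appropriate sense (its sections are the graph of $\phi$), so $\mu$ is a measurable projection; but the Castaing-representation route is cleaner and is the standard one. Once $\mu$ is known measurable, the rest of the theorem is an immediate corollary of Proposition~\ref{prop-filippov}.
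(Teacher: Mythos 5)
Your proposal is correct. Note that the paper does not prove this proposition at all: it is quoted as background, with the proof delegated to Aliprantis--Border, Chapter~18 (the same place where Filippov's theorem and Castaing's theorem come from). Your derivation is essentially the standard one from that reference: first obtain measurability of the value function $\mu$ via a Castaing representation $\phi(s)=\cl\{\xi_k(s),\,k\ge1\}$, writing $\mu(s)=\sup_k f(s,\xi_k(s))$ (each composition is measurable since a \carat\ function on $S\times\R^N$ is jointly measurable, $\R^N$ being separable), and then feed $\pi=\mu$ into Filippov's theorem to get that $\psi$ is measurable and compact, with non-emptiness supplied by Weierstrass and the selector by the last clause of Filippov's theorem. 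All hypotheses are met: $\phi$ is non-empty, compact and measurable, so Castaing's theorem (already invoked in the paper's proof of the projection proposition) applies, and $\mu$ is finite. The only thing your plan buys beyond the paper's citation is self-containedness relative to the other quoted propositions; your alternative ``measurable projection'' remark for Step~1 would need a completeness assumption on the underlying $\sigma$-algebra, so the Castaing route you chose is indeed the right one.
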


Proofs of the above results can be found in the book of
\citet[Chapter~18]{AliprantisBorder06} for random sets in general metric
spaces, except the result about $\cap_n \phi_n$, which holds (in a metric
space) if $\phi_n$ are compact. For $\R^N$, it can be extended to closed
sets using that $\R^N$ is $\sigma$-compact.

For the reader's convenience, the following results are provided with
proofs (they are not included in the above-mentioned book).

\begin{proposition}
\label{prop-projection}
Let $L$ be a random linear subspace of $\R^N$ (i.e.\ for each $s$ the set
$L(s)$ is a linear space and the correspondence $L$ is measurable),
$L^\perp$ be the orthogonal space, and $\phi$ be a closed random set in
$\R^N$. Then the projection correspondence
\[
\pr_L \phi(s) = \{x \in L(s) : \exists\, y \in L^\perp(s)\ \text{such that}\
x+y \in \phi(s)\}
\]
is measurable.
\end{proposition}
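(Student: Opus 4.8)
The plan is to reduce the measurability of $\pr_L\phi$ to the known measurability results for random sets by writing $\pr_L\phi$ explicitly as a composition of measurable operations. The key observation is that a point $x$ lies in $\pr_L\phi(s)$ precisely when $x\in L(s)$ and the affine subspace $x+L^\perp(s)$ meets $\phi(s)$. So the first step is to express the orthogonal complement $L^\perp$ as a measurable random set: since $L$ is measurable, one can pick a measurable orthonormal (or just measurable spanning) selection of $L$ — for instance by applying the measurable selection / Gram--Schmidt argument to a measurable basis, or more simply by noting that $L^\perp(s)=\{x : \s{x}{\ell(s)}=0\ \text{for all measurable selectors }\ell\ \text{of }L\}$ — and conclude that $L^\perp$ is a measurable random linear subspace, with a measurable orthogonal projection matrix $P(s)$ onto $L(s)$ (equivalently $I-P(s)$ onto $L^\perp(s)$). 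The measurability of $s\mapsto P(s)$ follows from the measurability of a spanning set of $L$ together with continuity of the projection formula.

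The second step is to rewrite the defining condition. For $x\in L(s)$, the requirement $\exists\,y\in L^\perp(s):x+y\in\phi(s)$ is equivalent to $d\bigl(x,\ \phi(s)\cap(x+L^\perp(s))\bigr)=0$, but it is cleaner to use the projection: $x+y\in\phi(s)$ for some $y\in L^\perp(s)$ iff the affine fiber $\{z\in\R^N : P(s)z = x\}$ intersects $\phi(s)$, i.e.\ iff $x\in P(s)\phi(s):=\{P(s)z : z\in\phi(s)\}$. Hence
\[
\pr_L\phi(s) = L(s)\cap P(s)\phi(s).
\]
Now $L$ is a closed random set by hypothesis, and intersection of closed random sets in $\R^N$ is a closed random set by the first proposition above, so it remains to show $s\mapsto P(s)\phi(s)$ is a (closed) random set. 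This is a pushforward of the random set $\phi$ under the measurable field of linear maps $P(\cdot)$; its measurability is standard: the map $(s,z)\mapsto(s,P(s)z)$ is a \carat-type map, so for any open $A$ one has $\{s: P(s)\phi(s)\cap A\neq\emptyset\} = \{s: \phi(s)\cap P(s)^{-1}(A)\neq\emptyset\}$, and since $P(s)^{-1}(A)$ depends measurably on $s$ and $\phi$ is a random set, the latter is measurable (e.g.\ via a Castaing representation of $\phi$ by countably many measurable selectors $\sigma_k$, so that $P(s)\phi(s)=\cl\{P(s)\sigma_k(s):k\ge1\}$, a countable union of measurable selectors).

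The main obstacle I anticipate is the bookkeeping around closedness: $P(s)\phi(s)$ need not be closed when $\phi(s)$ is merely closed (a closed set can have non-closed linear image), so one should either work with $\cl\bigl(P(s)\phi(s)\bigr)$ throughout — noting $L(s)\cap P(s)\phi(s) = L(s)\cap\cl(P(s)\phi(s))$ is false in general, so this substitution must be justified or avoided — or, more safely, argue measurability of $\pr_L\phi$ directly from the Castaing representation without passing through a closed-set intersection: writing $\pr_L\phi(s)=\{P(s)\sigma_k(s):k\ge1\}$-closure intersected appropriately, or simply verifying $d(x,\pr_L\phi(s))$ is measurable in $s$ for each fixed $x$ by $d(x,\pr_L\phi(s)) = \inf_k \|x - P(s)\sigma_k(s)\|$, which is a countable infimum of \carat\ functions hence measurable. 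That last route sidesteps the closedness issue entirely and is probably the cleanest way to finish.
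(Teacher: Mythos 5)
Your final route is correct and essentially the paper's own argument: both proofs take a Castaing representation $\phi(s)=\cl\{\sigma_k(s),\,k\ge1\}$, project the countably many measurable selectors onto $L(s)$, and dispose of the non-closedness of the projected set --- the paper via the fact that a correspondence is measurable iff its closure is, you via the equivalent distance-function criterion $d(x,\pr_L\phi(s))=\inf_k\|x-P(s)\sigma_k(s)\|$. Just be sure to keep that final formulation (your intermediate claim $P(s)\phi(s)=\cl\{P(s)\sigma_k(s)\}$ is not literally true, as you yourself note) and to handle the measurable set $\{s:\phi(s)=\emptyset\}$ separately, where $d(x,\pr_L\phi(s))=\infty$, as the paper's case distinction does.
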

\begin{proof}
By Castaing's theorem (see Corollary 18.14 in \cite{AliprantisBorder06}), a
non-empty closed correspondence is measurable if and only if it can be
represented as the closure of a countable family of measurable selectors
from it. Hence, we can find measurable $\xi_i$ such that $\phi(s) =
\cl\{\xi_i(s), i\ge1\}$ on the set $\{s : \phi(s)\neq \emptyset\}$. Using
that
\[
\cl(\pr_L\phi(s)) =
\begin{cases}
\cl\{\pr_L \xi_i(s), i\ge 1\}, &\text{if}\ \phi(s)\neq \emptyset,\\
\emptyset,&\text{if}\ \phi(s)= \emptyset,\\
\end{cases}
\]
one can see that $\cl(\pr_L\phi)$ is measurable. Since the measurability of
a correspondence is equivalent to the measurability of its closure
\cite[Lemma~18.3]{AliprantisBorder06}, $\pr_L\phi$ is measurable.
\end{proof}

\begin{proposition}
\label{convergent-subsequence}
Let $\phi$ be a non-empty compact random set and $\xi_n$ be a sequence of
measurable selectors from it. Then there exists a measurable selector $\xi$
from $\phi$ and a sequence of measurable functions $1\le
i_1(s)<i_2(s)<\ldots$ with integer values such that
$\lim_{j\to\infty}\xi_{i_j(s)}(s) = \xi(s)$ for all $s$.
\end{proposition}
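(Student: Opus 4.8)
The plan is to realise the limit $\xi$ as a measurable selector from the random set of cluster points of the sequence $(\xi_n)_{n\ge1}$, and then to peel off the indices $i_j(s)$ by a greedy, manifestly measurable recursion.

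First I would introduce, for each $n\ge1$, the correspondence
\[
\psi_n(s)=\cl\{\xi_k(s):k\ge n\}.
\]
Since $\xi_n,\xi_{n+1},\ldots$ are measurable selectors from $\psi_n$ and $\psi_n$ is exactly their pointwise closure, Castaing's theorem (invoked as in the proof of Proposition~\ref{prop-projection}) shows that $\psi_n$ is a non-empty closed random set; it is in fact compact because $\psi_n(s)\subseteq\phi(s)$ and $\phi(s)$ is compact. The cluster set $\Phi=\bigcap_{n\ge1}\psi_n$ is then a closed random set, being a countable intersection of closed random sets; it is compact, as a closed subset of $\phi$; and $\Phi(s)\neq\emptyset$ for every $s$, since the $\psi_n(s)$ form a decreasing sequence of non-empty compact sets. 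Applying Proposition~\ref{prop-filippov} (Filippov's theorem) with $f\equiv0$, $\pi\equiv0$ and $\phi$ replaced by $\Phi$ produces a measurable selector $\xi$ of $\Phi$. Because $\Phi(s)\subseteq\phi(s)$ — here I use that $\phi(s)$ is closed, so every cluster point of a sequence in $\phi(s)$ lies in $\phi(s)$ — this $\xi$ is simultaneously a measurable selector from $\phi$, and by construction $\xi(s)$ is a cluster point of $(\xi_n(s))_n$ for every $s$.

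Next I would define the indices recursively by
\[
i_1(s)=\min\{k\ge1:|\xi_k(s)-\xi(s)|<1\},\qquad
i_{j+1}(s)=\min\{k>i_j(s):|\xi_k(s)-\xi(s)|<1/(j+1)\}.
\]
The sets on the right are non-empty precisely because $\xi(s)$ is a cluster point of $(\xi_n(s))_n$, so each $i_j(s)$ is well-defined, finite, and strictly increasing in $j$. Measurability is checked by induction on $j$: each map $s\mapsto|\xi_k(s)-\xi(s)|$ is measurable, so $\{s:i_1(s)=k\}$ is measurable for every $k$; and assuming $i_j$ measurable, on each (measurable) set $\{s:i_j(s)=m\}$ the function $i_{j+1}$ coincides with $\min\{k>m:|\xi_k(s)-\xi(s)|<1/(j+1)\}$, which is measurable there, so $i_{j+1}$ is measurable after taking the countable union over $m$. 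Finally $|\xi_{i_j(s)}(s)-\xi(s)|<1/j$ for all $j\ge1$, whence $\xi_{i_j(s)}(s)\to\xi(s)$ for every $s$, which is the assertion.

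The only non-routine ingredient is the measurability of the cluster set $\Phi$ together with the existence of a measurable selector from it; everything afterwards is bookkeeping. This reduces to two facts already available in the text — Castaing's characterisation of measurable closed correspondences and the stability of closed random sets under countable intersection — so I do not anticipate a serious obstacle. The one point that deserves a word of care is the observation that $\phi(s)$ is closed, which keeps the cluster points inside $\phi(s)$ and hence makes $\xi$ a legitimate selector from $\phi$ rather than merely from $\Phi$.
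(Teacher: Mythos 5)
Your proof is correct and follows essentially the same route as the paper: you take a measurable selector $\xi$ from the cluster-point set $\bigcap_n \cl\{\xi_k,\,k\ge n\}$ and then extract indices with $|\xi_{i_j}-\xi|<1/j$. The only (immaterial) difference is that you define $i_{j+1}$ explicitly as a greedy minimum and verify its measurability by hand, whereas the paper obtains it as a measurable selector from the random subset $\{k>i_j : |\xi_k-\xi|\le j^{-1}\}$ of $\mathbb{N}$.
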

\begin{proof}
The set $\psi(s) = \cap_{n} \mathrm{cl}\{\xi_k(s),\; k\ge n\}$ is
measurable, non-empty, and closed, so there exists a measurable selector
$\xi \in \psi$ (by Castaing's theorem mentioned above). Then the sequence
$i_j$ can be constructed by induction as follows. Put $i_1=1$. If $i_j$ is
defined, consider the random set $\eta_j(s) = \{k> i_j(s) : |\xi_k(s) -
\xi(s)| \le j^{-1}\} \subset \mathbb{N}$, which is measurable, non-empty,
and closed. Let $i_{j+1}$ be a measurable selector from $\eta_j$. Then
$|\xi_{i_{j+1}} - \xi| < j^{-1}$, which gives the desired convergence.
\end{proof}

\subsection{Proof of Lemma~\ref{lemma-alpha-beta}}
\textit{Proof of claim (a)}. Fix any $t\ge 0$. Let $f_i(\chi,\alpha)$ be the
function which is maximized in the definition of $\bha_{t,i}$, i.e.\ on the
set $\Thetabp_{t}$ put
\[
f_i = \E_{t} \ln g_i( \s\alpha{X_{t+1}}\b W_{t} + |\btY_{t+1}|) - \s e\alpha,
\]
while on the set $\Thetab\setminus\Thetabp_{t}$ put $f_i=0$. The function
$f_i$ is a \carat\ function, and the set $\bpA_t$, over which it is
maximized, is compact by Proposition~\ref{prop-na}. Hence the
measurable maximum theorem implies the existence of a measurable selector
$\bha_{t,i}$ from the $\argmax$ in \eqref{max-gi}.

\textit{Proof of claim (b)} readily follows from
Proposition~\ref{convergent-subsequence}. Before we continue with the proof
of claim (c), let us show that $\bha_t$ satisfies a number of relations that
will be used in its proof, as well as in the proof of Theorem~\ref{theorem}.

\begin{lemma}
For any $t\ge 0$, $\chi\in\Thetabp_{t}$, and $\alpha \in \bA_t(\chi)$ we have
\begin{align}
&\P_{t}(\s{\bha_t}{X_{t+1}}\b W_{t}
+ |\btY_{t+1}| > 0)=1,\label{ineq-positive}\\
&\E_{t} \biggl(\frac{\s{\bha_t - \alpha}{X_{t+1}}\b W_{t}}
{\s{\bha_t}{X_{t+1}}\b W_{t}
+ |\btY_{t+1}|} \biggr)\ge \s{e}{\bha_t - \alpha},\label{alpha-ineq}\\
&\E_{t} \biggl(\frac{\s{\bha_t}{X_{t+1}}\b W_{t}} {\s{\bha_t}{X_{t+1}}\b W_{t} + |\btY_{t+1}|} \biggr) =
\s{e}{\bha_t}  .\label{alpha-eq}
\end{align}
\end{lemma}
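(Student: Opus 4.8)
The three relations are all statements about the functions $g_i$ in the limit, so I would first establish the corresponding relations for each $\bha_{t,i}$ and then pass to the limit along the subsequence $i_j$ from claim (b). For fixed $i$, write $R_i(\alpha) = \s\alpha{X_{t+1}}\b W_t + |\btY_{t+1}|$, so that $f_i(\chi,\alpha) = \E_t \ln g_i(R_i(\alpha)) - \s e\alpha$ (on $\Thetabp_t$). Since $g_i > 0$ everywhere on $\R_+$ and $g_i$ is smooth with $g_i'(x) = i^2/(i^2 + x^2) > 0$, the integrand $\ln g_i(R_i(\alpha))$ is a concave function of $\alpha$ (composition of the increasing concave $\ln g_i$ with the affine map $\alpha \mapsto R_i(\alpha)$ — note $g_i$ itself is concave and increasing, hence so is $\ln g_i$), so $f_i$ is concave in $\alpha$ on the convex set $\bpA_t$. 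The first-order optimality condition at the maximizer $\bha_{t,i}$ then reads: for every $\alpha \in \bpA_t$, the directional derivative of $f_i$ at $\bha_{t,i}$ toward $\alpha$ is $\le 0$. Computing this derivative (differentiation under the integral sign is justified because $g_i'/g_i$ is bounded and the increment $\s{\alpha - \bha_{t,i}}{X_{t+1}}\b W_t$ is $\E_t$-integrable, as $\bpA_t \subseteq D_t$ keeps things one-signed and $\bha_{t,i} \in D_t$ gives the finite bound) yields
\[
\E_t\biggl( \frac{g_i'(R_i(\bha_{t,i}))}{g_i(R_i(\bha_{t,i}))}\, \s{\alpha - \bha_{t,i}}{X_{t+1}}\b W_t \biggr) - \s e{\alpha - \bha_{t,i}} \le 0,
\]
i.e. the analogue of \eqref{alpha-ineq} with the weight $g_i'(R_i)/g_i(R_i)$ in place of $1/R_i$. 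Because $g_i$ is a cone point at $0$ — more precisely $g_i(x) \ge 1/i > 0$ — this inequality holds unconditionally, with no positivity requirement yet.

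**Passing to the limit.** The key elementary facts about $g_i$ are: $g_i(x) \to x$ and $x\, g_i'(x)/g_i(x) \to 1$ pointwise for $x > 0$ as $i \to \infty$, while $g_i'(x)/g_i(x) \le i/(1 + i x) $-type bounds keep the weights controlled; also $\ln g_i(x) \le \ln(1/i + \pi i /2)$ grows slowly. Along $\bha_{t,i_j} \to \bha_t$, I want to push the inequality above to \eqref{alpha-ineq}. The cleanest route: first extract \eqref{ineq-positive}. Suppose $\P_t(\s{\bha_t}{X_{t+1}}\b W_t + |\btY_{t+1}| = 0) > 0$ on a set of positive probability. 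On that event, because $\bC_t$ is nonempty and by \ref{assumption-2} there is a portfolio $(\alpha^*,\beta^*)$ with $\P_t(\s{\alpha^*}{X_{t+1}} + \s{\beta^*}{Y_{t+1}} > 0) = 1$; projecting $\alpha^*$ into $\bpA_t$ via \ref{assumption-3} (removing the null-investment component does not change $\s{\alpha^*}{X_{t+1}}$ by definition of $L_t$), one gets a competitor direction along which the directional derivative of $f_i$ at $\bha_{t,i}$ becomes arbitrarily large and positive as $i \to \infty$ — the weight $g_i'(R_i(\bha_{t,i}))/g_i(R_i(\bha_{t,i}))$ blows up like $i^2$ where $R_i(\bha_{t,i})$ is near $0$ — contradicting the first-order inequality. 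This forces \eqref{ineq-positive}. Once positivity is known, on the full-probability event $\{R(\bha_t) > 0\}$ (where $R(\alpha) := \s\alpha{X_{t+1}}\b W_t + |\btY_{t+1}|$) we have $g_i(R_i(\bha_{t,i_j})) \to R(\bha_t)$ and $g_i'(R_i(\bha_{t,i_j})) \to$ finite, so the weights converge a.s.\ to $1/R(\bha_t)$, and \eqref{alpha-ineq} follows by Fatou's lemma applied to the positive and negative parts separately (the sign control from $\bA_t \subseteq D_t$ is exactly what makes Fatou applicable in the right direction — the ``bad'' direction of each term is dominated).

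**The equality \eqref{alpha-eq}.** This is the scaling/Euler identity and is the one place where assumption \eqref{A-cone} on $\bA_t$ is used. Apply \eqref{alpha-ineq} with $\alpha = \lambda \bha_t$ for $\lambda$ slightly above and slightly below $1$ — both are admissible by \eqref{A-cone} since $\bha_t \in \bpA_t \subseteq \bA_t$ and $\s e{\bha_t} \in [0,1]$, so $\lambda \bha_t \in \bA_t$ for $\lambda$ in a neighborhood of $1$ (if $\s e{\bha_t} = 0$ any $\lambda \ge 0$ works, and then the inequality for all $\lambda$ forces equality directly). Subtracting the two one-sided inequalities, $\E_t\big( \s{(1-\lambda)\bha_t}{X_{t+1}}\b W_t / R(\bha_t)\big) \ge (1-\lambda)\s e{\bha_t}$ for $\lambda < 1$ and the reverse for $\lambda > 1$; dividing by $(1-\lambda)$ and letting $\lambda \to 1$ from both sides pins down \eqref{alpha-eq}. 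One subtlety: I should check $\E_t\big(\s{\bha_t}{X_{t+1}}\b W_t / R(\bha_t)\big)$ is finite so this manipulation is legitimate — but since $0 \le \s{\bha_t^+}{X_{t+1}}\b W_t / R(\bha_t) \le 1$ by construction of $R$ and $\s{\bha_t^-}{X_{t+1}} = 0$ a.s.\ would be false in general; instead I bound using $\s{\bha_t}{X_{t+1}}\b W_t \le R(\bha_t)$ on the event where the numerator is positive and use \eqref{ineq-positive} on the complement, so the ratio is bounded by $1$ from above and integrability holds.

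**Main obstacle.** The genuine difficulty is the limiting argument in the degenerate case — proving \eqref{ineq-positive} and then justifying the interchange of limit and expectation for the weights $g_i'(R_i)/g_i(R_i)$, which are unbounded precisely on the event one needs to rule out. Getting the directional-derivative blow-up quantitative enough (using \ref{assumption-2} to supply a uniformly profitable competing direction and \ref{assumption-3}–\ref{assumption-4} to keep it inside the compact feasible set $\bpA_t$) is the crux; once positivity is secured, everything else is Fatou plus the cone property \eqref{A-cone}.
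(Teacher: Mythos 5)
Your proposal follows essentially the same route as the paper's proof: first-order optimality for the $g_i$-truncated problems over the compact set $\bpA_t$, a blow-up contradiction via \ref{assumption-2} to obtain \eqref{ineq-positive}, Fatou plus the domination $x g_i'(x)/g_i(x)\le 2$ to pass to the limit and get \eqref{alpha-ineq}, and the dilation $(1\pm\epsilon)\bha_t$ allowed by \eqref{A-cone} to force \eqref{alpha-eq}. Two small repairs are needed: the interchange of derivative and $\E_t$ can only be one-sided (Fatou or monotone convergence of the concave difference quotients, since $\s{\alpha}{X_{t+1}}$ need not be $\E_t$-integrable, so your stated integrability justification is unfounded even though the inequality you need survives), and when $\s e{\bha_t}=1$ the dilation with $\lambda>1$ is not admissible under \eqref{A-cone}, so that boundary case must be settled separately (e.g.\ via the pointwise bound that the ratio in \eqref{alpha-eq} is at most $1$, or by taking $\alpha=0$, as the paper does).
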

\begin{proof}
Fix $t,\chi,\alpha,i$, and consider the function $u(\epsilon) =
f_i((1-\epsilon)\bha_t + \epsilon \alpha)$, $\epsilon\in[0,1]$, i.e.\
\begin{equation}
u(\epsilon) = \E_{t} \ln g_i\bigl( \s{(1-\epsilon) \bha_{t,i} + \epsilon
\alpha}{X_{t+1}}\b W_t +
|\btY_{t+1}|\bigr) (\chi)  - \s{e}{(1-\epsilon)\bha_{t,i}(\chi) + \epsilon\alpha}.\label{u-eps}
\end{equation}
Since $\ln g_i(x)$ is concave for $x\ge 0$, the function $u(\epsilon)$ is
also concave. As it attains the maximum value at $\epsilon=0$, the right
derivative $u'(0)\le 0$. We want to interchange the order of differentiation
and taking the expectation. The expectation in \eqref{u-eps} can be written
as $\E_t \ln g_i(q(\tilde \omega,\epsilon)) = \int_\Omega \ln g_i(q(\tilde
\omega, \epsilon)) K_t(\omega,d\tilde\omega)$ with
\[
q(\tilde\omega, \epsilon) = \s{(1-\epsilon) \bha_{t,i}(\tilde \chi) + \epsilon
\alpha}{X_{t+1}(\tilde \omega)}\b
W_t(\tilde \chi) +
|\btY_{t+1}(\tilde\chi)|,
\]
where $\tilde \chi = (\tilde\omega, v_0,h_0,\ldots)$ and $v_0,(h_s)_{s\ge
0}$ are taken from $\chi=(\omega,v_0,h_0,\ldots)$. By applying Fatou's
lemma, we obtain ($\tilde \omega$ is omitted for brevity)
\begin{equation}
(\E_t \ln g_i(q(\epsilon)))'_{\epsilon=0} \ge \E_t \frac{g'_i(q(
0))}{g_i(q(0))} \s{\alpha-\bha_{t,i}}{X_{t+1}} \b W_t.\label{u-fatou}
\end{equation}
Fatou's lemma can be applied since for $\epsilon\in[0,1)$ we have the lower
bound ($\P_t$-a.s.\ in $\tilde \omega$)
\begin{multline*}
\frac{\ln g_i(q(\epsilon)) - \ln
g_i(q(0))}{\epsilon} \ge (\ln g_i(q(\epsilon)))'
= \frac{g_i'(q(\epsilon))}{g_i(q(\epsilon))}\s{\alpha-\bha_{t,i}}{X_{t+1}}\b W_t \\
\ge -i g_i'((1-\epsilon)\s{\bha_{t,i}}{X_{t+1}}\b W_t)\s{\bha_{t,i}}{X_{t+1}}\b W_t
\ge  -\frac{i^3}{1-\epsilon}.
\end{multline*}
Here in the first inequality we used the concavity of $\ln
g_i(q(\epsilon))$. In the second inequality we used the relation
$\P_t(\s\alpha{X_{t+1}} \ge 0)=1$, the bound $g_i(x)\ge 1/i$, and that
$g'_i(x)$ is non-increasing for $x\ge 0$. The last last inequality holds
because $g_i'(x)x \le i^2$.

Therefore, from \eqref{u-eps} and \eqref{u-fatou}  we obtain 
\begin{equation}
0 \ge u'(0) \ge    \E_{t}(\b\xi_i
\s{\alpha}{X_{t+1}}\b W_{t}) - \E_{t}(\b\xi_i
\s{\bha_{t,i}}{X_{t+1}}\b W_{t}) - \s{e}{\alpha-\bha_{t,i}},
\label{f-ineq}
\end{equation}
where
\[
\b\xi_i = \frac{g_i'(q(0))}{g_i(q(0))} =
\frac{g_i'(\s{\bha_{t,i}}{X_{t+1}}\b W_{t}+ |\btY_{t+1}|)}
{g_i(\s{\bha_{t,i}}{X_{t+1}}\b W_{t} + |\btY_{t+1}|)}.
\]
One can see that for all $x\ge 0$
\begin{equation}
0\le\frac{xg_i'(x)}{g_i(x)} \le 2, \qquad
\lim_{i\to\infty}\frac{xg_i'(x)}{g_i(x)}= \I(x>0).\label{g-limit}
\end{equation}
The above inequality can be obtained by using that $\arctan(x/i) \ge x/(2i)$
if $x\le i$ and $\arctan(x/i) \ge \pi/4$ if $x\ge i$; the computation of the
limit is straightforward. Relations \eqref{g-limit} allow to apply the
dominated convergence theorem to the second expectation in \eqref{f-ineq},
which gives
\begin{equation}
\lim_{i\to\infty}\E_{t}(\b\xi_i \s{\bha_{t,i}}{X_{t+1}}\b W_{t}) =
\E_{t}\biggl(\frac{\s{\bha_t}{X_{t+1}}\b W_{t}} {\s{\bha_t}{X_{t+1}}\b W_{t} +
|\btY_{t+1}|} \biggr),\label{limit-1}
\end{equation}
where at this point we assume $0/0=0$ in the right-hand side (according
with the indicator in \eqref{g-limit}). However, as follows from assumption
\ref{assumption-2}, there exists $\tilde\alpha$ such that
$\P_{t}(\s{\tilde \alpha}{X_{t+1}}\b W_{t} + |\btY_{t+1}| > 0)=1$. Applying
Fatou's lemma to the first expectation in \eqref{f-ineq} with $
\alpha=\tilde\alpha$, we find that \eqref{ineq-positive} must hold, since
otherwise we would have
\[
\liminf_{i\to\infty}\E_{t}(\b\xi_i \s{\tilde\alpha}{X_{t+1}}\b W_{t}) = +\infty,
\]
which contradicts \eqref{f-ineq}. Consequently, for any $\alpha\in \bA_t$ we
have
\[
\liminf_{i\to\infty} \E_{t}(\b\xi_i
\s{\alpha}{X_{t+1}}\b W_{t}) \ge
\E_{t}\biggl(\frac{\s{\alpha}{X_{t+1}}\b W_{t}}
{\s{\bha_t}{X_{t+1}}\b W_{t}
+ |\btY_{t+1}|} \biggr),
\]
which together with \eqref{f-ineq} and \eqref{limit-1} implies
\eqref{alpha-ineq}.

Let us prove \eqref{alpha-eq}. If $\s e{\bha_t(\chi)}=1$, it clearly follows
from \eqref{alpha-ineq} with $\alpha=0$. If $\s e{\bha_t(\chi)}<1$, we can
consider small $\epsilon> 0$ and take as $\alpha$ in \eqref{alpha-ineq}
\[
\alpha^{(\pm\epsilon)} := (1\pm\epsilon)\bha_t(\chi) \in  \bA_t(\chi),
\]
which gives \eqref{alpha-eq} after simple transformations.
\end{proof}

\medskip
\noindent
\textit{Proof of claim (c) of Lemma~\ref{lemma-alpha-beta}}. 
If $\bB_t(\chi) \neq \{0\}$, then $\tilde\bB_t(\chi)\neq\emptyset$ in view
of \eqref{B-cone}. If $\bB_t(\chi) = \{0\}$, then $\btY_t(\chi) = 0$, and
\eqref{alpha-eq} implies that $\s e{\bha_t(\chi)}=1$, so $\tilde\bB_t(\chi)
= \{0\}$ is non-empty again.

Let $f(\chi,\beta)$ denote the function being maximized in \eqref{beta-def}:
\[
f(\chi,\beta) = \sum_{n=1}^{N_2} \ln\beta^n \E_{t}\biggl( \frac{\btY_{t+1}^n}{\s
{\bha_t}{X_{t+1}}\b W_{t} + |\btY_{t+1}|}\biggr)(\chi).
\]
The function $f$ may be discontinuous in $\beta$. In order to apply the
measurable maximum theorem, let us take $\G_{t-1}$-measurable
$\tilde{\b\beta}(\chi)\in \bB_t(\chi)$ such that $|\tilde{\b\beta}(\chi)| =
1-\bha_t(\chi)$ and $\tilde{\b\beta}^n(\chi)>0$ if
$\P_{t}(\btY_{t+1}^n>0)(\omega)>0$. Then we can consider the function
\[
\tilde f(\chi,\beta) = \max (f(\chi,\beta), f(\chi,\tilde{\b\beta}(\chi))),
\]
which is a \carat\ function and satisfies the relation
\[
\argmax_{\beta \in \tilde\bB_t} f(\chi,\beta) =
\argmax_{\beta \in \tilde\bB_t} \tilde f(\chi,\beta).
\]
Hence the measurable maximum theorem can be applied to $\tilde f$, giving
$\bhb_t$ which also maximizes $f$.

\subsection{Proofs of Theorems \ref{theorem} and \ref{th-numeraire}}
Let us prove two more inequalities which together with \eqref{alpha-ineq}
will be used in the proofs.
\begin{lemma}
For any $t\ge 0$, $\chi\in\Thetabp_{t}$, and $\beta\in \bB_t(\chi)$ we have
\begin{gather}
\E_{t}\biggl(\frac{\s{\ln \bhb_t -\ln\beta}{\btY_{t+1}}}{\s{\bha_t}{X_{t+1}}\b W_{t}
+ |\btY_{t+1}|}\biggr) \ge  |\bhb_t| - |\beta|,\label{beta-ineq}\\
\E_t \frac{|\btY_{t+1}| - \sum_n
\beta^n \btY_{t+1}^n/\bhb_t^n}{\s{\bha_t}{X_{t+1}}\b W_t + |\btY_{t+1}|} \ge
|\bhb_t| - |\beta|,\label{corollary}
\end{gather}
where in \eqref{corollary} we let $\beta^n \btY_{t+1}^n(\chi)/\bhb_t^n(\chi)
= 0$ if $\bhb_t^n(\chi)=0$ $($then $\P_t(\btY_{t+1}^n=0)(\chi)=1$ as follows
from~\eqref{beta-def}{}$)$.
\end{lemma}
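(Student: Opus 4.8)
The plan is to reduce both inequalities, after one integration, to the maximality property \eqref{beta-def} of $\bhb_t$. Fix $t$ and $\chi\in\Thetabp_{t}$, write $D=\s{\bha_t}{X_{t+1}}\b W_t+|\btY_{t+1}|$, and recall that $D>0$ $\P_t$-a.s.\ by \eqref{ineq-positive} while $\s{\bha_t}{X_{t+1}}\ge0$ $\P_t$-a.s.\ since $\bha_t\in\bpA_t\subseteq\bA_t\subseteq D_t$ by \ref{assumption-3},~\ref{assumption-1}. Put $a:=1-\s e{\bha_t}$ and $c_n:=\E_t(\btY_{t+1}^n/D)\in[0,1]$. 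I would first record that $|\bhb_t|=a=\E_t(|\btY_{t+1}|/D)=\sum_{n}c_n$: the first equality holds because $\bhb_t\in\tilde\bB_t$; the second follows from $|\btY_{t+1}|/D=1-\s{\bha_t}{X_{t+1}}\b W_t/D$ and \eqref{alpha-eq}; the third is linearity of $\E_t$. Note also that $\bhb_t^n=0$ forces $\P_t(\btY_{t+1}^n=0)=1$ (as observed after the statement) and hence $c_n=0$, so that $c_n>0$ implies $\bhb_t^n>0$ and the sum above has finite terms.

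For \eqref{beta-ineq}, let $\beta\in\bB_t$. If $\beta^n=0$ and $c_n>0$ for some $n$, then the left-hand side equals $+\infty$ (its $n$-th summand is $(+\infty)\btY_{t+1}^n$ with $\P_t(\btY_{t+1}^n>0)>0$, while every other summand has finite $\E_t$-integral, bounded below by $(\ln\bhb_t^k)c_k$), so there is nothing to prove; the boundary cases $\beta=0$ and $a=0$ (the latter forcing $\bhb_t=0$, $\btY_{t+1}=0$ a.s.) reduce to $0\ge0$. Otherwise $\beta\ne0$, $a>0$, and $\beta^n>0$ whenever $c_n>0$. Since $|\beta|=\s e\beta\le1$ and $a\le1$, \eqref{B-cone} gives $\gamma:=(a/|\beta|)\beta\in\bB_t$ with $|\gamma|=a$, so $\gamma\in\tilde\bB_t$. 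Pulling the $\G_t^-$-measurable factors out of $\E_t$, property \eqref{beta-def} says that $\bhb_t$ maximizes $\beta\mapsto\sum_n c_n\ln\beta^n$ over $\tilde\bB_t$; hence, since only indices with $c_n>0$ — for which $\beta^n,\bhb_t^n>0$ — contribute to the sums,
\[
\sum_n c_n(\ln\bhb_t^n-\ln\beta^n)=\Bigl(\sum_n c_n\ln\bhb_t^n-\sum_n c_n\ln\gamma^n\Bigr)+a\ln\frac{a}{|\beta|}\ \ge\ a\ln\frac{a}{|\beta|}\ \ge\ a-|\beta|,
\]
the last inequality by $\ln x\le x-1$. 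The left-hand side of \eqref{beta-ineq} equals $\sum_n c_n(\ln\bhb_t^n-\ln\beta^n)$ (constants pulled out of $\E_t$) and $a-|\beta|=|\bhb_t|-|\beta|$, which proves \eqref{beta-ineq}.

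For \eqref{corollary} I would use the first-order optimality condition for the same problem. The set $\tilde\bB_t=\bB_t\cap\{\beta:\s e\beta=a\}$ is convex, so for $\gamma\in\tilde\bB_t$ the function $\phi(\epsilon)=\sum_{n:\bhb_t^n>0}c_n\ln((1-\epsilon)\bhb_t^n+\epsilon\gamma^n)$ is finite and concave on $[0,1)$ and maximal at $\epsilon=0$; differentiating at $\epsilon=0^+$ gives $\sum_n c_n(\gamma^n/\bhb_t^n-1)\le0$ (terms with $\bhb_t^n=0$ carry $c_n=0$), that is, $\sum_n c_n\gamma^n/\bhb_t^n\le\sum_n c_n=a=|\gamma|$. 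Given $\beta\in\bB_t$, the cases $\beta=0$ and $a=0$ are again immediate; otherwise take $\gamma=(a/|\beta|)\beta\in\tilde\bB_t$ and rescale to obtain $\sum_n c_n\beta^n/\bhb_t^n\le|\beta|$. Pulling the constants $\beta^n/\bhb_t^n$ out of $\E_t$ then yields
\[
\E_t\frac{|\btY_{t+1}|-\sum_n\beta^n\btY_{t+1}^n/\bhb_t^n}{D}=a-\sum_n\frac{\beta^n}{\bhb_t^n}\,c_n\ \ge\ a-|\beta|=|\bhb_t|-|\beta|,
\]
which is \eqref{corollary}.

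The real difficulty is only bookkeeping with the indeterminacies $0\cdot\ln0$ and $a\cdot\ln0$ and with the boundary cases, the point being that an index $n$ with $\bhb_t^n=0$ automatically has $c_n=0$ and $\btY_{t+1}^n=0$ a.s., hence contributes nothing anywhere. The one genuinely substantive step is the passage from \eqref{beta-def}, stated for the functional $\E_t(\cdot)$, to the finite-dimensional optimization $\max_{\beta\in\tilde\bB_t}\sum_n c_n\ln\beta^n$; it is legitimate because the integrand factors as a $\G_t^-$-measurable part times $\btY_{t+1}^n/D\in[0,1]$, so $\E_t$ commutes with the finite sum and with multiplication by the $\G_t^-$-measurable coefficients $\ln\beta^n$, the sole caveat (when $\ln\beta^n=-\infty$) being absorbed into the $+\infty$ case above.
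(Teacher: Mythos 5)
Your proof is correct and follows essentially the same route as the paper: for \eqref{beta-ineq} you rescale $\beta$ into $\tilde\bB_t$ via \eqref{B-cone}, invoke the argmax property \eqref{beta-def}, and finish with $\ln x\ge 1-x^{-1}$ together with the identity $\E_t(|\btY_{t+1}|/(\s{\bha_t}{X_{t+1}}\b W_t+|\btY_{t+1}|))=|\bhb_t|$ coming from \eqref{alpha-eq}, while for \eqref{corollary} you take a one-sided derivative at $\epsilon=0$ of the interpolated log-objective, exactly the paper's mechanism. The only (harmless) organizational difference is that you collapse the conditional expectations into the coefficients $c_n$ and run the first-order condition inside $\tilde\bB_t$ plus rescaling, so your \eqref{corollary} does not rely on \eqref{beta-ineq}, whereas the paper interpolates in $\bB_t$ with the penalization $-|\cdot|$ and uses \eqref{beta-ineq} to justify maximality at $\epsilon=0$; your treatment of the degenerate cases ($\bhb_t^n=0$, $\beta=0$, $\s e{\bha_t}=1$) is, if anything, more explicit than the paper's.
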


\begin{proof}
Clearly, \eqref{beta-ineq} holds if $|\beta|=|\bhb_t(\chi)|$, as follows
from the definition of $\bhb_t$. If $|\beta|\neq|\bhb_t(\chi)|$, we have
\begin{multline*}
\E_{t}\biggl(\frac{\s{\ln \bhb_t -\ln\beta}{\btY_{t+1}}}{\s{\bha_t}{X_{t+1}}\b W_{t} + |\btY_{t+1}|}\biggr) \ge
\E_{t}\biggl(\frac{ |\btY_{t+1}| \ln({|\bhb_t|}/{|\beta|})}{\s{\bha_t}{X_{t+1}}\b W_{t} +
|\btY_{t+1}|}\biggr) \\ \ge \E_{t}\biggl(\frac{ |\btY_{t+1}|}{\s{\bha_t}{X_{t+1}}\b W_{t} +
|\btY_{t+1}|}\biggr) \frac{|\bhb_t|-|\beta|}{|\bhb_t|} = |\bhb_t| - |\beta|,
\end{multline*}
where in the first inequality we represented $\ln \beta = \ln
(\beta|\bhb_t|/|\beta|) - \ln(|\bhb_t|/|\beta|)$ and applied
\eqref{beta-ineq} to $\beta|\bhb_t|/|\beta|$ instead of $\beta$; in the
second inequality we used the estimate $\ln a \ge 1-a^{-1}$; and in the
equality applied \eqref{alpha-eq}. This proves \eqref{beta-ineq}.

To prove \eqref{corollary}, observe that the function
\[
f(\epsilon) =
\E_t \biggl(\frac{\s{\ln ((1-\epsilon)\bhb_t + \epsilon\beta)}{\btY_{t+1}}}
{\s{\bha_t}{X_{t+1}}\b W_{t} + |\btY_{t+1}|} \biggr)
-
|(1-\epsilon)\bhb_t + \epsilon\beta_t|, \qquad \epsilon\in[0,1],
\]
attains its maximum at $\epsilon=0$ and is differentiable on $[0,1)$, so its
derivative at zero
\[
f'(0) = \sum_{n=1}^{N_2} \frac{(\beta^n - \bhb_t^n)\btY_{t+1}^n}
{\bhb_t^n(\s{\bha_t}{X_{t+1}}\b W_{t} + |\btY_{t+1}|)} + |\bhb_t|
- |\beta| 
\]
should be non-positive, which gives \eqref{corollary} (here, the $n$-th term
in the sum is treated as zero when $\bhb_t^n(\chi)=0$, and hence
$\btY_{t+1}^n=0$).
\end{proof}

\begin{proof}[Proof of Theorem~\ref{theorem}]
Assume that the strategy $\hat{\bH}$ is used by agent $m=1$. Let us fix
the initial wealth and the strategies of the other agents, and pass on to
a realization of the strategies $h_t^m = (\alpha_t^m, \beta_t^m)$, wealth
$v_t^m$, and relative wealth $r_t^m$ as functions of $\omega$ only. In
notation for agent 1, we will also use the hat instead of the superscript
``1'', i.e.\ $\hat\alpha = \alpha^1$, $\hat\beta=\beta^1$, etc.

Introduce the predictable sequence of random vectors $F_t\in \R_+^{N_2}$
with the components
\[
F_t^{n} = \frac{\hat\beta_t^{n}}{\sum_m r_{t}^m\beta_t^{m,n}},
\]
where $0/0=0$. From \eqref{wealth-equation}, we obtain the relations
\[
\hat v_{t+1} = \biggl(\s{\hat\alpha_t}{X_{t+1}} +
\frac{\s{F_t}{\tilde Y_{t+1}}}{W_{t}}\biggr) \hat v_{t}, \qquad
W_{t+1} = \biggl( \sum_{m=1}^M r_{t}^m\s{\alpha_t^m}{X_{t+1}} + 
\frac{|\tilde Y_{t+1}|}{W_{t}}\biggr)W_{t}.
\]
Consequently, we find $\ln \hat r_{t+1} - \ln \hat r_{t} =
f_t(X_{t+1},\tilde Y_{t+1})$, where $f_t = f_t(\omega,x,y)$ is the
$\F_{t}\otimes\B(\R^N)$-measurable function
\[
f_t(x,y) = \ln \biggl(\frac{\s{\hat \alpha_t}{x}W_{t} + \s{F_t}{y}}
{W_{t}\sum_m r_{t}^m\s{\alpha_t^m}{x} + |y|}\biggr)
\]
(the argument $\omega$ is omitted for brevity).

We need to show that $\E_{t} f_t(X_{t+1},\tilde Y_{t+1}) \ge 0$. Rewrite the
function $f_t(x,y)$ as
\begin{equation}
\label{f1-f2}
\begin{split}
f_t(x,y) &= \ln\biggl(\frac{\s{\hat\alpha_t}{x}W_{t} + |y|}{W_{t}\sum_m
r_{t}^m\s{\alpha_t^m}{x} + |y|}\biggr)  +
\ln\biggl(\frac{\s{\hat\alpha_t}{x}W_{t} + \s{F_t}y}{\s{\hat\alpha_t}{x}W_{t} + |y|}\biggr)\\
&:= f^{(1)}_t(x,y) + f^{(2)}_t(x,y).
\end{split}
\end{equation}
For the first term, we can use the inequality $\ln x \ge 1-x^{-1}$ and apply
\eqref{alpha-ineq}, which gives
\begin{equation}
\label{f1-ineq}
\E_{t} f_t^{(1)}(X_{t+1},\tilde Y_{t+1}) \ge \E_{t}
\frac{\s{\hat\alpha_t-\sum_m
r_{t}^m\alpha_t^m}{X_{t+1}}W_{t}}{\s{\hat\alpha_t}{X_{t+1}}W_{t} + |\tilde Y_{t+1}|} \ge \biggl\langle
e,\;{\hat\alpha_t - \sum_{m=1}^M r_{t}^m \alpha_t^m}\biggr\rangle.
\end{equation}
For the second term in \eqref{f1-f2}, we have
\begin{equation}
\label{f2-ineq}
\E_{t} f_t^{(2)}(X_{t+1},\tilde Y_{t+1}) \ge
\E_{t} \frac{\s{\ln F_t}{\tilde Y_{t+1}}}{\s{\hat\alpha_t}{X_{t+1}}W_{t} + |\tilde Y_{t+1}|}\ge
|\hat\beta_t| - \sum_{m=1}^M r_{t}^m |\beta_t^m|,
\end{equation}
where the first inequality follows from the concavity of the logarithm, and
the second one follows from that $\ln F_t = \ln\hat\beta_t - \ln \sum_m
r_{t}^m \beta_t^m$ and inequality \eqref{beta-ineq}.

Using that $|\beta_t^m| + \s e{\alpha_t^m} = 1$, we see that $\E_{t}
f_t(X_{t+1},\tilde Y_{t+1}) \ge 0$, hence $\E_{t} \ln \hat r_{t+1} \ge \ln
\hat r_{t}$. Since $\ln\hat r_t$ is a non-positive sequence, this inequality
also implies the integrability of $\ln\hat r_t$ (by induction, beginning
with $\ln \hat r_0$), so it is a submartingale.
\end{proof}

\begin{proof}[Proof of Theorem~\ref{th-numeraire}]
When all the agents use $\hat\bH$, from~\eqref{prices} we find $\b p_t^n
= \bhb_t^n \b W_t$, and hence $\s{\bhb_t}{\b Z_{t+1}} = {\btY_{t+1}}/{\b
W_t}$. Adding \eqref{alpha-ineq} and \eqref{corollary}, we obtain
\eqref{numeraire}. Then \eqref{numeraire-2} follows by Jensen's inequality.
\end{proof}

\subsection{Proof of Theorem~\ref{th-dominance}}
We will need the following proposition which provides two
inequalities of a general nature.
\begin{proposition}
\label{lemma-ineqs}
1) For any $a,b\in (0,1]$ 
\begin{equation}
\ln\frac{a+b}{2} - \frac{\ln a + \ln b}{2} \ge
\frac{(a-b)^2}8.\label{log-ineq}
\end{equation}
2) Suppose
$x,y\in\R^N_+$ are two vectors such that $|x|\le1$, $|y|\le 1$, and
for each $n$ it holds that if $y^n=0$, then also $x^n=0$. Then
\begin{equation}
\s x{\ln x - \ln y} \ge \frac{\|x-y\|^2}{4} + |x|-|y|.\label{logsum}
\end{equation}
\end{proposition}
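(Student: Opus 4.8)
The plan is to establish the two inequalities in Proposition~\ref{lemma-ineqs} by elementary calculus, treating the second as a refinement of Gibbs' inequality.

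For part~1), I would fix $b\in(0,1]$ and study $\varphi(a) = \ln\frac{a+b}{2} - \frac{\ln a+\ln b}{2} - \frac{(a-b)^2}{8}$ on $(0,1]$. One checks $\varphi(b)=0$, and it suffices to show $\varphi'(a)$ has the sign of $a-b$, i.e.\ $\varphi$ decreases then increases with a minimum at $a=b$. Computing, $\varphi'(a) = \frac{1}{a+b} - \frac{1}{2a} - \frac{a-b}{4} = \frac{a-b}{2a(a+b)} - \frac{a-b}{4} = (a-b)\bigl(\frac{1}{2a(a+b)} - \frac14\bigr)$. Since $a,b\le 1$ gives $2a(a+b)\le 4$, the bracketed factor is non-negative, so $\varphi'(a)$ and $a-b$ have the same sign, proving $\varphi\ge 0$. (Alternatively one substitutes $a=bu$ and reduces to a one-variable inequality, but the direct approach above is cleanest.)

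For part~2), the strategy is to reduce to the scalar inequality and sum. Writing $\s{x}{\ln x-\ln y} = \sum_n x^n(\ln x^n - \ln y^n)$ with the convention that the $n$-th term is $0$ when $x^n=0$ (permitted since then $y^n=0$ too, or $y^n>0$ and $x^n\ln(x^n/y^n)\to 0$), I would first recall the standard bound $x^n\ln\frac{x^n}{y^n} \ge x^n - y^n + \frac{(x^n-y^n)^2}{2\max(x^n,y^n)}$, which follows from $t\ln t \ge t-1 + \frac{(t-1)^2}{2\max(t,1)}$ applied at $t = x^n/y^n$ when $y^n>0$. Since $\max(x^n,y^n)\le 1$, the remainder term dominates $\frac{(x^n-y^n)^2}{2}$; but the claimed constant is $\frac14$, so in fact the weaker pointwise bound $x^n\ln\frac{x^n}{y^n}\ge x^n-y^n+\frac{(x^n-y^n)^2}{2}$ already suffices after summing, giving $\s{x}{\ln x-\ln y}\ge |x|-|y|+\frac{\|x-y\|^2}{2}\ge |x|-|y|+\frac{\|x-y\|^2}{4}$. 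So the plan is: prove the scalar inequality $t\ln t \ge t-1+\frac{(t-1)^2}{2}$ for $t>0$ (or rather $t\ln t\ge t - 1 + \tfrac{(t-1)^2}{2}$ fails for large $t$; instead use $\max$ in the denominator), then homogenize and sum.

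The main obstacle is getting the scalar estimate right, since $t\ln t - (t-1) - \frac{(t-1)^2}{2}$ is \emph{negative} for large $t$, so one genuinely needs the $\max(x^n,y^n)$ in the denominator: the correct scalar fact is $t\ln t \ge (t-1) + \frac{(t-1)^2}{2\max(t,1)}$ for all $t>0$, proved by splitting into $t\ge 1$ (where it reads $t\ln t\ge t-1+\frac{(t-1)^2}{2t}$, i.e.\ $2t^2\ln t\ge 2t(t-1)+(t-1)^2 = 3t^2-4t+1$, checkable via derivatives vanishing at $t=1$) and $t\le 1$ (where it reads $2\ln t\ge 2(1-1/t)+(1-1/t)^2$ after dividing, again a one-variable calculus check). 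Multiplying through by $y^n$ and using $y^n\max(x^n/y^n,1) = \max(x^n,y^n)\le 1$ converts this to $x^n\ln\frac{x^n}{y^n}\ge x^n-y^n+\frac{(x^n-y^n)^2}{2\max(x^n,y^n)}\ge x^n-y^n+\frac{(x^n-y^n)^2}{2}$, and summation over $n$ finishes with room to spare relative to the stated constant $\tfrac14$. I would double-check the boundary conventions ($x^n=0$, $y^n=0$) do not break the termwise bound, which is the only subtle point in the reduction.
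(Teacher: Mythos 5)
Your part 1) is correct and is essentially the paper's own argument: the paper likewise fixes $b$, differentiates the difference of the two sides of \eqref{log-ineq}, and reduces the sign of the derivative to $a(a+b)\le 2$, which is exactly your factor $\frac{1}{2a(a+b)}-\frac14\ge 0$ in disguise. For part 2) the paper does not argue at all -- it cites a known Kullback--Leibler bound and refers to Lemma~2 of Drokin and Zhitlukhin (2020) for a short direct proof -- so your termwise argument is a genuinely self-contained alternative, and its skeleton is sound: the scalar inequality $t\ln t\ge t-1+\frac{(t-1)^2}{2\max(t,1)}$ is true; evaluating it at $t=x^n/y^n$ and multiplying by $y^n>0$ gives $x^n\ln\frac{x^n}{y^n}\ge x^n-y^n+\frac{(x^n-y^n)^2}{2\max(x^n,y^n)}\ge x^n-y^n+\frac{(x^n-y^n)^2}{2}$ since $\max(x^n,y^n)\le 1$; the boundary cases are harmless ($x^n=y^n=0$ contributes $0$ on both sides, and $x^n=0<y^n\le 1$ requires only $0\ge -y^n+\frac{(y^n)^2}{2}$); summation then yields \eqref{logsum} with the better constant $\tfrac12$.

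There is, however, one flawed step: your verification of the scalar inequality for $t\le 1$ is wrong as written. Dividing $t\ln t\ge t-1+\frac{(t-1)^2}{2}$ by $t$ gives $\ln t\ge 1-\frac1t+\frac{(t-1)^2}{2t}$, not $2\ln t\ge 2\bigl(1-\frac1t\bigr)+\bigl(1-\frac1t\bigr)^2$; the latter has $\frac{(t-1)^2}{2t^2}$ on the right and is in fact false (at $t=\tfrac12$ it asserts $-2\ln 2\ge -1$). The repair is immediate: do not divide, and check $h(t)=t\ln t-(t-1)-\frac{(t-1)^2}{2}$ directly on $(0,1]$, where $h(1)=0$ and $h'(t)=\ln t-t+1\le 0$, so $h$ is nonincreasing and hence nonnegative there. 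Your $t\ge 1$ case ($2t^2\ln t\ge 3t^2-4t+1$, checked via derivatives vanishing at $t=1$ and $g''(t)=4\ln t\ge0$) is fine. With that one-line correction, part 2) stands.
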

\begin{proof}
1) Assume $a\le b$. The inequality clearly holds if $a=b$. Let $f(a)$ be the
difference of its left-hand side and right-hand side, with $b$ fixed. It is
enough to show that $f'(a)\le 0$ for $a\in(0,b]$. After differentiation,
this becomes equivalent to $a(a+b) \le 2$. The latter inequality is clearly
true, provided that $a,b\in(0,1]$.

2) Inequality \eqref{logsum} follows from a known inequality for the
Kullback-Leibler divergence if $x/|x|$ and $y/|y|$ are considered as
probability distributions on a set of $N$ elements. Its short direct proof can be
found in \citet[Lemma~2]{DrokinZhitlukhin20}.
\end{proof}

\begin{proof}[Proof of Theorem~\ref{th-dominance}]
We will use the same notation for realizations of strategies as in the proof
of Theorem~\ref{theorem}. It was shown that $\ln \hat r_t$ is a
submartingale. Let $c_t$ be its compensator, i.e.\ the predictable
non-decreasing sequence such that $\ln \hat r_t - c_t$ is a martingale; in
the explicit form
\[
c_t = \sum_{s\le t}( \E_{s-1} \ln \hat r_s - \ln \hat r_{s-1}).
\]
As was shown in the proof of Theorem~\ref{theorem},
\[
c_{t+1} - c_{t} = \E_{t} f_t(X_{t+1}, Y_{t+1}) =
\E_{t} (f_t^{(1)}(X_{t+1}, Y_{t+1}) + f_t^{(2)}(X_{t+1}, Y_{t+1}))
\]
with $f^{(1)},f^{(2)}$ defined in \eqref{f1-f2}. Since $\ln \hat r_t$ is
non-positive and converges, we have $c_\infty < \infty$ with probability 1.
Let us consider again inequalities \eqref{f1-ineq}--\eqref{f2-ineq} and
strengthen them using Proposition~\ref{lemma-ineqs}. Fix $t\ge 1$ and let
\[
a = \frac{\s{\hat  \alpha_t}{X_{t+1}}+ | Y_{t+1}|/W_{t}}{Q_{t+1}}, \qquad
b = \frac{\s{\bar \alpha_t}{X_{t+1}} + | Y_{t+1}|/W_{t}}{Q_{t+1}}.
\]
Note that $a,b\in(0,1]$. Then
\begin{equation}
\label{f1-ineq-strong}
\begin{split}
\E_{t}f_t^{(1)}(X_{t+1}, Y_{t+1}) &= 2\E_{t}\biggl(\ln a - \frac{\ln a + \ln
b}{2}\biggr)
\\&\ge 2
\E_{t}\biggl( \ln \frac{a+b}{2} - \frac{\ln a + \ln b}{2}\biggr) +
\s e{\hat\alpha_t - \bar\alpha_t} \\&\ge
\biggl(\frac{\s{\hat\alpha - \bar\alpha}{X_{t+1}}}{2Q_{t+1}}\biggr)^2
+ \s e{\hat\alpha_t - \bar\alpha_t}.
\end{split}
\end{equation}
Here, in the first inequality we used the estimate
\[
\E_{t} \biggl(\ln a - \ln \frac{a+b}{2}\biggr) \ge \frac12
\E_{t}\frac{\s{\hat \alpha_t-
\bar \alpha_t}{X_{t+1}}W_{t}}{\s{\hat \alpha_t}{X_{t+1}}W_{t} + | Y_{t+1}|}
\ge \frac{1}{2} \s{e}{\hat \alpha_t - \bar \alpha_t},
\]
which is obtained similarly to \eqref{f1-ineq}. In the second inequality of
\eqref{f1-ineq-strong} we applied \eqref{log-ineq}.

For the function $f^{(2)}$, using that there are no portfolio constraints on
the endogenous assets, so $\bhb_t$ is given by
\eqref{beta-explicit}, we find
\begin{equation}
\label{f2-ineq-strong}
\begin{split}
\E_{t} f_t^{(2)}(X_{t+1},Y_{t+1}) &\ge \E_{t} \frac{\s{\ln
F_t}{Y_{t+1}}}{\s{\hat\alpha_t}{X_{t+1}}W_{t} + | Y_{t+1}|} = \s{\ln
F_t}{\hat\beta_t} = \s{\hat\beta_t}{\ln \hat\beta_t - \ln \bar \beta_t} \\&\ge
\frac{\|\hat\beta_t - \bar
\beta_t\|^2}{4} +  |\hat\beta_t| - |\bar \beta_t|,
\end{split}
\end{equation}
where the first inequality is obtained similarly to \eqref{f2-ineq}, and in
the second one we applied~\eqref{logsum}. Consequently, from
\eqref{f1-ineq-strong}, \eqref{f2-ineq-strong}, we obtain
\[
c_{t+1}-c_t \ge
\biggl(\frac{\s{\hat\alpha_t - \bar\alpha_t}{X_{t+1}}}{2Q_{t+1}}\biggr)^2
+
\frac{\|\hat\beta_t - \bar
\beta_t\|^2}{4}.
\]
From here, using that $c_\infty<\infty$, we get \eqref{proximity}. Moreover,
$\hat \alpha_t - \bar \alpha_t = (1-\hat r_t)(\hat \alpha_t - \tilde
\alpha_t)$ and $\hat \beta_t - \bar \beta_t = (1-\hat r_t)(\hat \beta_t -
\tilde \beta_t)$, so on the set \eqref{dominance} we necessarily have
$\lim_{t\to\infty}\hat r_t=1$.
\end{proof}

\small 
\setlength{\bibsep}{0.2em plus 0.3em}
\bibliographystyle{apalike}
\bibliography{cgs}

\end{document}